\newcommand{\val}[2]{\texttt{val}^{#1}(#2)}
\newcommand{\suchthat}{\;\ifnum\currentgrouptype=16 \middle\fi|\;}
\newcommand{\ie}{i.e., }
\newcommand{\eg}{e.g., }
\newcommand{\etal}{\textit{et al.}\xspace}
\newcommand{\wrt}{w.r.t.\ }
\newcommand{\Q}{\mathbb{Q}\xspace}
\newcommand{\N}{\mathbb{N}\xspace}
\newcommand{\Z}{\mathbb{Z}\xspace}
\newcommand{\MPG}{MPG\xspace}
\newcommand{\EG}{EG\xspace}
\newcommand{\MCP}{MCP\xspace}
\def\C{{\cal C}}
\def\W{{\cal W}}
\newcommand{\figref}[1]{Fig.~\ref{#1}}
\newtheorem{Thm}{Theorem}
\newtheorem{Exp}{Example}
\newtheorem{Lem}{Lemma}
\newtheorem{Prop}{Proposition}
\newtheorem{Def}{Definition}
\newtheorem{Rem}{Remark}
\newcounter{proccnt}
\newenvironment{algo-proc}[1][htb]
  {\refstepcounter{proccnt} 
   \begin{algorithm}[#1]
}{\end{algorithm}\addtocounter{algocf}{-1}}
\newcommand{\removelatexerror}{\let\@latex@error\@gobble}
\let\oldnl\nl
\newcommand{\nonl}{\renewcommand{\nl}{\let\nl\oldnl}}
\tikzstyle{node}=[circle,draw,inner sep=2pt,transform shape,minimum size=1.75em]
\newcommand*\sizedcircled[2]{\tikz[baseline=(char.base)]{ \node[shape=circle,draw,inner sep=2pt, scale=#1] (char) {\textbf{#2}}; }} 
\tikzstyle{every picture}=[>=latex]
\tikzstyle{every label}=[inner sep=2pt]
\journal{arXiv}
\begin{document}

\begin{frontmatter}

\title{Energy Structure of Optimal Positional Strategies \\ in Mean Payoff Games}

\author{Carlo Comin \\ \footnotesize (carlo.comin.86@gmail.com)}

\date{Received: date / Accepted: date}

\begin{abstract}
This note studies structural aspects concerning Optimal Positional Strategies (OPSs) in Mean Payoff Games (MPGs),
it's a contribution to understanding the relationship between OPSs in MPGs
and Small Energy-Progress Measures (SEPMs) in reweighted Energy Games (EGs).
Firstly, it is observed that the space of all OPSs, $\texttt{opt}_{\Gamma}\Sigma^M_0$, admits
a \emph{unique complete decomposition} in terms of so-called \emph{extremal}-SEPM{s} in reweighted EG{s};
this points out what we called the “Energy-Lattice $\mathcal{X}^*_{\Gamma}$ of $\texttt{opt}_{\Gamma}\Sigma^M_0$".
Secondly, it is offered a \emph{pseudo-polynomial total-time} recursive procedure for
\emph{enumerating} (w/o repetitions) all the elements of $\mathcal{X}^*_{\Gamma}$,
and for computing the corresponding partitioning of $\texttt{opt}_{\Gamma}\Sigma^M_0$.
It is observed that the corresponding recursion tree defines an additional lattice $\mathcal{B}^*_{\Gamma}$,
whose elements are certain subgames $\Gamma'\subseteq \Gamma$ that we call \emph{basic} subgames.
The extremal-SEPMs of a given \MPG $\Gamma$ coincide with the least-SEPMs of the basic subgames of $\Gamma$;
so, $\mathcal{X}^*_{\Gamma}$ is the energy-lattice comprising all and only the \emph{least}-SEPMs of
the \emph{basic} subgames of $\Gamma$. The complexity of the proposed enumeration
for both $\mathcal{B}^*_{\Gamma}$ and $\mathcal{X}^*_{\Gamma}$ is $O(|V|^3|E|W |\mathcal{B}^*_{\Gamma}|)$ total time
and $O(|V||E|)+\Theta\big(|E||\mathcal{B}^*_{\Gamma}|\big)$ working space.
Finally, it is constructed an \MPG $\Gamma$ for which $|\mathcal{B}^*_{\Gamma}| > |\mathcal{X}^*_\Gamma|$,
this proves that $\mathcal{B}^*_{\Gamma}$ and $\mathcal{X}^*_\Gamma$ are not isomorphic.
\end{abstract}

\begin{keyword} Mean Payoff Games, Optimal Strategy Synthesis, Pseudo-Polynomial Time, Energy Games, Small Energy-Progress Measures. \end{keyword}

\end{frontmatter}

\section{Introduction}\label{sect:introduction}
A \emph{Mean Payoff Game} (\MPG) is a two-player infinite game $\Gamma=(V, E, w, \langle V_0, V_1 \rangle)$,
that is played on a finite weighted directed graph, denoted $G^{\Gamma} \triangleq ( V, E, w )$, where $w:E\rightarrow \Z$,
the vertices of which are partitioned into two classes, $V_0$ and $V_1$, according to the player to which they belong.

At the beginning of the game a pebble is placed on some vertex $v_s\in V$, then the two players,
named Player~0 and Player~1, move it along the arcs ad infinitum. Assuming the pebble is currently on some $v\in V_0$,
then Player~0 chooses an arc $(v,v')\in E$ going out of $v$ and moves the pebble to the destination vertex $v'$.
Similarly, if the pebble is currently on some $v\in V_1$, it is Player~1's turn to choose an outgoing arc.
The infinite sequence $v_s,v,v'\ldots$ of all the encountered vertices forms a \emph{play}.
In order to play well, Player~0 wants to maximize the limit inferior of the long-run average weight
of the traversed arcs, \ie to maximize $\liminf_{n\rightarrow\infty} \frac{1}{n}\sum_{i=0}^{n-1}w(v_i, v_{i+1})$, whereas
Player~1 wants to minimize the $\limsup_{n\rightarrow\infty} \frac{1}{n} \sum_{i=0}^{n-1}w(v_i, v_{i+1})$.
\cite{EhrenfeuchtMycielski:1979}~proved that each vertex $v$ admits a \emph{value}, denoted $\val{\Gamma}{v}$,
that each player can secure by means of a \emph{memoryless} (or \emph{positional}) strategy,
\ie one depending only on the current vertex position and not on the previous choices.

Solving an \MPG consists in computing the values of all vertices (\emph{Value Problem}) and, for each player,
a positional strategy that secures such values to that player (\emph{Optimal Strategy Synthesis}).
The corresponding decision problem lies in $\NP\cap \coNP$~\citep{ZwickPaterson:1996} and it was later
shown to be in $\UP\cap\coUP$~\citep{Jurdzinski1998}.

The problem of devising efficient algorithms for solving \MPG{s} has been studied extensively in the literature.
The first milestone was settled in \cite{GKK88}, in which it was offered an \emph{exponential} time algorithm
for solving a slightly wider class of \MPG{s} called \emph{Cyclic Games}.
Afterwards, \cite{ZwickPaterson:1996} devised the first deterministic procedure for computing values in \MPG{s},
and optimal strategies securing them, within a pseudo-polynomial time and polynomial space.
In particular, it was established an $O(|V|^3 |E| W)$ upper bound for the time complexity of the Value Problem,
as well as an upper bound of $O(|V|^4|E| W \log(|E|/|V|))$ for that of Optimal Strategy Synthesis~\citep{ZwickPaterson:1996}.

Several research efforts have been spent in studying quantitative extensions of infinite games for
	modeling quantitative aspects of reactive systems, \eg the \emph{Energy Games (EGs)}~\citep{Chakrabarti03,Bouyer08,brim2011faster}.
These studies unveiled interesting connections between EGs and \MPG{s}; and by relying on these techniques,
	recently the worst-cast time complexity of the Value Problem and Optimal Strategy Synthesis was given an
	improved pseudo-polynomial upper bound~\citep{CR15, CR16}; those works focused on offering a simple proof of the improved upper bound.
However, the running time of the proposed algorithm turned out to be also $\Omega(|V|^2|E|W)$, the actual time complexity being
$\Theta\big(|V|^2 |E|\, W + \sum_{v\in V}\texttt{deg}_{\Gamma}(v)\cdot\ell_{\Gamma}^0(v)\big)$, where
	$\ell_{\Gamma}^0(v)\leq (|V|-1)|V|W$ denotes the total number of times that a certain energy-lifting operator $\delta(\cdot, v)$ is applied to any $v\in V$.
A way to overcome this issue was found in~\cite{CominR16a}, where a novel algorithmic scheme, named \emph{Jumping}, was introduced;
	by tackling on some further regularities of the problem, the estimate on the pseudo-polynomial time complexity of \MPG{s} was reduced to:
		$O(|E|\log |V|) + \Theta\big(\sum_{v\in V}\texttt{deg}_{\Gamma}(v)\cdot\ell_{\Gamma}^1(v)\big)$,
	where, for every $v\in V$, $\ell_{\Gamma}^1(v)$ is the total number of applications of $\delta(\cdot, v)$ that are made by the algorithm;
	$\ell_{\Gamma}^1\leq (|V|-1)|V|W$ (worst-case, but experimentally $\ell_{\Gamma}^1\ll\ell_{\Gamma}^0$;
	see~\cite{CominR16a}),	and the working space is $\Theta(|V|+|E|)$.
With this, the pseudo-polynomiality was confined to depend solely on the total number $\ell^1_{\Gamma}$ of required energy-liftings.

\paragraph*{Contribution}
This work studies the relationship between Optimal Positional Strategies (OPSs) in MPGs
	and Small Energy-Progress Measures (SEPMs) in reweighted EGs.
Actually this paper is an extended and revised version of Section~5 in \cite{CR15}.
Here, we offer:
\begin{itemize}
 	\item[1.] \emph{An Energy-Lattice Decomposition of the Space of Optimal Positional Strategies in MPGs.}
\end{itemize}
Let's denote by $\texttt{opt}_{\Gamma}\Sigma^M_0$ the space of all the optimal positional strategies in a given \MPG $\Gamma$.
What allows the algorithms given in~\cite{CR15, CR16, CominR16a} to compute at least
one $\sigma^*_0\in \texttt{opt}_{\Gamma}\Sigma^M_0$ is a \emph{compatibility} relation
that links optimal arcs in \MPG{s} to arcs that are \emph{compatible} \wrt least-SEPM{s} in reweighted \EG{s}.
The family $\mathcal{E}_\Gamma$ of all SEPM{s} of a given \EG $\Gamma$ forms a complete finite lattice, the Energy-Lattice of the \EG $\Gamma$.
Firstly, we observe that even though compatibility \wrt \emph{least}-SEPMs in reweighted \EG{s} implies optimality of positional strategies
in \MPG{s} (see Theorem~\ref{Thm:pos_opt_strategy}), the converse doesn't hold generally (see Proposition~\ref{prop:counter_example}).
Thus a natural question was whether compatibility \wrt SEPM{s} was really appropriate to capture (\eg to provide a recursive enumeration of) the
whole $\texttt{opt}_{\Gamma}\Sigma^M_0$ and not just a proper subset of it.
Partially motivated by this question we explored on the relationship between $\texttt{opt}_{\Gamma}\Sigma^M_0$
and $\mathcal{E}_\Gamma$. In Theorem~\ref{thm:main_energystructure}, it is observed a unique complete decomposition of
$\texttt{opt}_{\Gamma}\Sigma^M_0$ which is expressed in terms of so called \emph{extremal}-SEPM{s} in reweighted EG{s}.
This points out what we called the “Energy-Lattice $\mathcal{X}^*_{\Gamma}$ associated to $\texttt{opt}_{\Gamma}\Sigma^M_0$",
the family of all the extremal-SEPM{s} of a given \MPG $\Gamma$.
So, compatibility \wrt SEPM{s} actually turns out to be appropriate for constructing the whole $\texttt{opt}_{\Gamma}\Sigma^M_0$;
but an entire lattice $\mathcal{X}^*_{\Gamma}$ of extremal-SEPM{s} then arises (and not just the least-SEPM,
which turns out to account only for the join/top component of $\texttt{opt}_{\Gamma}\Sigma^M_0$).

\begin{itemize}
	\item[2.] \emph{A Recursive Enumeration of Extremal-SEPMs and Optimal Positional Strategies in MPGs.}
\end{itemize}
It is offered a pseudo-polynomial total time recursive procedure
for enumerating (w/o repetitions) all the elements of $\mathcal{X}^*_{\Gamma}$,
and for computing the associated partitioning of $\texttt{opt}_{\Gamma}\Sigma^M_0$.
This shows that the above mentioned compatibility relation is appropriate so to extend the algorithm given in~\cite{CominR16a},
recursively, in order to compute the whole $\texttt{opt}_{\Gamma}\Sigma^M_0$ and $\mathcal{X}^*_{\Gamma}$.
It is observed that the corresponding recursion tree actually defines an additional lattice $\mathcal{B}^*_{\Gamma}$,
whose elements are certain subgames $\Gamma'\subseteq \Gamma$ that we call \emph{basic} subgames.
The extremal-SEPMs of a given $\Gamma$ coincide with the least-SEPMs of the basic subgames of $\Gamma$;
so, $\mathcal{X}^*_{\Gamma}$ is the energy-lattice comprising all and only the \emph{least}-SEPMs of
the \emph{basic} subgames of $\Gamma$. The total time complexity of the proposed enumeration
for both $\mathcal{B}^*_{\Gamma}$ and $\mathcal{X}^*_{\Gamma}$ is $O(|V|^3|E|W |\mathcal{B}^*_{\Gamma}|)$,
it works in space $O(|V||E|)+\Theta\big(|E||\mathcal{B}^*_{\Gamma}|\big)$.
An example of \MPG $\Gamma$ for which $|\mathcal{B}^*_{\Gamma}| > |\mathcal{X}^*_\Gamma|$ ends this paper.

\paragraph*{Organization}
The following Section~\ref{sect:background} introduces some notation and provides the
required background on infinite 2-player pebble games and related algorithmic results.
In Section~\ref{section:values}, a suitable relation between values, optimal strategies,
and certain reweighting operations is recalled from~\cite{CR15, CR16}.
Section~\ref{sect:energy} offers a unique and complete energy-lattice decomposition of $\texttt{opt}_{\Gamma}\Sigma^M_0$.
Finally, Section~\ref{sect:recursive_enumeration} provides a recursive enumeration
of $\mathcal{X}^*_{\Gamma}$ and the corresponding partitioning of $\texttt{opt}_{\Gamma}\Sigma^M_0$.

\section{Notation and Preliminaries}\label{sect:background}
We denote by $\N$, $\Z$, $\Q$ the set of natural, integer, and rational numbers.
It will be sufficient to consider integral intervals,
  \eg $[a,b]\triangleq\{z\in\Z\mid a\leq z\leq b\}$
  and $[a,b)\triangleq\{z\in\Z\mid a \leq z < b\}$ for any $a,b\in \Z$.
Our graphs are directed and weighted on the arcs; thus, if $G=(V, E, w)$ is a graph,
  then every arc $e\in E$ is a triplet $e=(u,v,w_e)$, where $w_e = w(u,v)\in\Z$.
Let $W \triangleq \max_{e\in E} |w_e|$ be the maximum absolute weight.
Given a vertex $u\in V$, the set of its successors is $N_\Gamma^{\text{out}}(u) \triangleq \{ v \in V \mid (u,v) \in E \}$,
  and the set of its predecessors is $N_\Gamma^{\text{in}}(u) \triangleq \{ v \in V \mid (v,u) \in E \}$.
Let $\texttt{deg}_\Gamma(v)\triangleq |N_\Gamma^{\text{in}}(v)| + |N_\Gamma^{\text{out}}(v)|$.
A \emph{path} is a sequence $v_0v_1\ldots v_n\ldots$ such that $\forall^{i\in [n]}\, (v_i, v_{i+1}) \in E$.
Let $V^*$ be the set of all (possibly empty) finite paths.
A \emph{simple path} is a finite path $v_0v_1\ldots v_n$ having no repetitions,
  \ie for any $i,j\in [0,n]$ it holds $v_i \neq v_j$ if $i\neq j$.
A \emph{cycle} is a path $v_0v_1\ldots v_{n-1}v_n$ such that $v_0\ldots v_{n-1}$ is simple and $v_n = v_0$.
The \emph{average weight} of a cycle $v_0\ldots v_n$ is $w(C)/|C|=\frac{1}{n} \sum_{i=0}^{n-1} w(v_i,v_{i+1})$.
A cycle $C=v_0v_1\ldots v_n$ is \emph{reachable} from $v$ in $G$ if
  there is some path $p$ in $G$ such that $p\cap C\neq\emptyset$.

An \emph{arena} is a tuple $\Gamma = (V, E, w, \langle V_0, V_1\rangle)$
where $G^{\Gamma} \triangleq (V, E, w)$ is a finite weighted directed graph and $(V_0, V_1)$ is a partition
of $V$ into the set $V_0$ of vertices owned by Player~0, and $V_1$ owned by Player~$1$.
It is assumed that $G^{\Gamma}$ has no sink, \ie $\forall^{v\in V} N_{\Gamma}^{\text{out}}(v)\neq\emptyset$;
we remark that $G^{\Gamma}$ is not required to be a bipartite graph on colour classes $V_0$ and $V_1$.
A subarena $\Gamma'$ (or \emph{subgame}) of $\Gamma$ is any arena $\Gamma' = (V', E', w', \langle V'_0, V'_1\rangle )$
such that: $V'\subseteq V$, $\forall^{i\in\{0,1\}} V'_i=V'\cap V_i$, $E'\subseteq E$, and $\forall^{e\in E'} w'_e=w_e$.
Given $S\subseteq V$, the subarena of $\Gamma$ induced by $S$ is denoted $\Gamma_{|_{S}}$,
its vertex set is $S$ and its edge set is $E'=\{(u,v)\in E \mid u,v\in S\}$.
A game on $\Gamma$ is played for infinitely many rounds by two players moving a pebble along the arcs
of $G^{\Gamma}$. At the beginning of the game the pebble is found on some vertex $v_s\in V$,
which is called the \emph{starting position} of the game. At each turn,
assuming the pebble is currently on a vertex $v\in V_i$ (for $i=0, 1$),
Player~$i$ chooses an arc $(v,v')\in E$ and then the next turn starts with the pebble on $v'$.
Below, \figref{fig:ex1_arena} depicts an example arena $\Gamma_{\text{ex}}$.

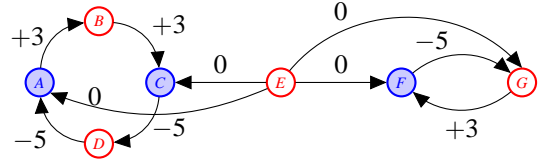
\begin{figure}[!h]
\center
\begin{tikzpicture}[scale=.6,arrows={-triangle 45}, node distance=1.5 and 2]
 		\node[node, thick, color=red] (E) {$E$};
		\node[node, thick, color=blue, left=of E, fill=blue!20] (C) {$C$};
		\node[node, thick, color=red, above=of C, xshift=-8.5ex, yshift=-5ex] (B) {$B$};
		\node[node, thick, color=blue, left=of C, fill=blue!20] (A) {$A$};
		\node[node, thick, color=red, below=of C, xshift=-8.5ex, yshift=5ex] (D) {$D$};
		\node[node, thick, color=blue, right=of E, fill=blue!20] (F) {$F$};
		\node[node, thick, color=red, right=of F] (G) {$G$};
		\draw[] (E) to [bend left=0] node[above] {$0$} (C);
		\draw[] (E) to [bend left=22] node[above left, xshift=-4ex] {$0$} (A.south east);
		\draw[] (E) to [bend left=50] node[above left, xshift=-4ex, yshift=-1ex] {$0$} (G.north);
		\draw[] (E) to [bend left=0] node[above] {$0$} (F);
		\draw[] (A) to [bend left=40] node[left] {$+3$} (B);
		\draw[] (B) to [bend left=40] node[xshift=2ex, yshift=1ex] {$+3$} (C);
		\draw[] (C) to [bend left=40] node[xshift=2ex, yshift=0ex] {$-5$} (D);
		\draw[] (D) to [bend left=40] node[xshift=-2ex, yshift=-1ex] {$-5$} (A);
		\draw[] (F) to [bend left=40] node[xshift=-2.5ex, yshift=-.5ex, above] {$-5$} (G);
		\draw[] (G) to [bend left=40] node[below] {$+3$} (F);
\end{tikzpicture}
\caption{
An arena $\Gamma_{\text{ex}}=\langle V, \E, w, (V_0, V_1) \rangle$. Here, $V=\{A,B,C,D,E,F,G\}$ and
$\E=\{(A,B,+3), (B,C,+3), (C,D,-5)$, $(D,A,-5), (E,A,0), (E,C,0), (E,F,0), (E,G,0), (F,G,-5), (G,F,+3)\}$.
Also, $V_0=\{B,D,E,G\}$ is colored in red, while $V_1=\{A,C,F\}$ is filled in blue.
}\label{fig:ex1_arena}
\end{figure}

A \emph{play} is any infinite path $v_0v_1\ldots v_n\ldots\in V^\omega$ in $\Gamma$.
For any $i\in \{0,1\}$, a strategy of Player~$i$ is any function $\sigma_i:V^*\times V_i\rightarrow V$
such that for every finite path $p'v$ in $G^{\Gamma}$,
where $p'\in V^*$ and $v\in V_i$, it holds that $(v, \sigma_i(p', v))\in E$.
A strategy $\sigma_i$ of Player $i$ is \emph{positional} (or \emph{memoryless})
if $\sigma_i(p, v_n) = \sigma_i(p', v'_m)$ for every finite paths $pv_n=v_0\ldots v_{n-1}v_n$
and $p'v'_m=v'_0\ldots v'_{m-1}v'_m$ in $G^{\Gamma}$ such that $v_n=v'_m\in V_i$.
The set of all the positional strategies of Player~$i$ is denoted by $\Sigma^M_i$.
A play $v_0v_1\ldots v_n\ldots $ is \emph{consistent} with a strategy
$\sigma\in\Sigma_i$ if $v_{j+1} = \sigma(v_0v_1\ldots v_j)$ whenever $v_j\in V_i$.

Given a starting position $v_s\in V$, the \emph{outcome} of two strategies $\sigma_0 \in\Sigma_0$ and $\sigma_1 \in\Sigma_1$,
denoted $\texttt{outcome}^{\Gamma}(v_s, \sigma_0, \sigma_1)$,
is the unique play that starts at $v_s$ and is consistent with both $\sigma_0$ and $\sigma_1$.

Given a memoryless strategy $\sigma_i\in\Sigma^M_i$ of Player~$i$ in $\Gamma$,
then $G(\sigma_i, \Gamma)=(V, E_{\sigma_i}, w)$ is the graph obtained from $G^{\Gamma}$
by removing all the arcs $( v,v')\in E$ such that $v\in V_i$ and $v'\neq \sigma_i(v)$;
we say that $G(\sigma_i, \Gamma)$ is obtained from $G^{\Gamma}$ \emph{by projection} \wrt $\sigma_i$.

For any weight function $w':E\rightarrow \Z $, the \emph{reweighting} of $\Gamma=(V, E, w, \langle V_0, V_1\rangle )$ \wrt $w'$
is the arena $\Gamma^{w'}= (V, E, w', \langle V_0, V_1\rangle )$. Also, for $w:E\rightarrow \Z$ and any $\nu\in \Z$,
we denote by $w+\nu$ the weight function $w'$ defined as $\forall^{e\in E} w'_e \triangleq w_e+\nu$.
Indeed, we shall consider reweighted games of the form $\Gamma^{w-q}$, for some $q\in \Q$.
Notice that the corresponding weight function $w':E\rightarrow\Q:e\mapsto w_e-q$ is rational,
while we required the weights of the arcs to be always integers.
To overcome this issue, it is sufficient to re-define $\Gamma^{w-q}$ by  scaling all weights by a factor equal
to the denominator of $q\in \Q$; \ie when $q\in \Q$,
say $q=N/D$ for $\gcd(N,D)=1$ we define $\Gamma^{w-q}\triangleq \Gamma^{D\cdot w-N}$.
This rescaling operation doesn't change the winning regions of the corresponding games,
let's denote this equivalence as $\Gamma^{w-q}\cong \Gamma^{D\cdot w - N}$,
and it has the significant advantage of allowing for a discussion (and an algorithmics) which is strictly based on integer weights.

\subsection{Mean Payoff Games}
A \emph{Mean Payoff Game}~(\MPG)~\citep{brim2011faster, ZwickPaterson:1996, EhrenfeuchtMycielski:1979}
is a game played on some arena $\Gamma$ for infinitely many rounds by two opponents,
Player~$0$ gains a payoff defined as the long-run average weight of the play,
whereas Player~$1$ loses that value.
Formally, the Player~$0$'s \emph{payoff} of a play $v_0v_1\ldots v_n\ldots $
in $\Gamma$ is defined as follows:
\[\texttt{MP}_0(v_0v_1\ldots v_n\ldots)\triangleq\liminf_{n\rightarrow\infty}
	\frac{1}{n}\sum_{i=0}^{n-1}w(v_i, v_{i+1}).\]
The value \emph{secured} by a strategy $\sigma_0\in\Sigma_0$ in a vertex $v$ is defined as:
\[\texttt{val}^{\sigma_0}(v)\triangleq
\inf_{\sigma_1\in\Sigma_1}\texttt{MP}_0\big(\texttt{outcome}^{\Gamma}(v, \sigma_0, \sigma_1)\big),\]
Notice that payoffs and secured values can be defined symmetrically for the Player~$1$
(\ie by interchanging the symbol \emph{0} with \emph{1} and \emph{inf} with \emph{sup}).

Ehrenfeucht and Mycielski~\cite{EhrenfeuchtMycielski:1979} proved that each vertex
$v\in V$ admits a unique \emph{value}, denoted $\val{\Gamma}{v}$, which each player can secure by means
of a \emph{memoryless} (or \emph{positional}) strategy. Moreover,
\emph{uniform} positional optimal strategies do exist for both players,
in the sense that for each player there exist at least one
positional strategy which can be used to secure all the optimal values,
independently with respect to the starting position $v_s$.
Thus, for every \MPG $\Gamma$, there exists a strategy
$\sigma_0\in\Sigma^M_0$ such that $\forall^{v\in V} \val{\sigma_0}{v}\geq \val{\Gamma}{v}$,
and there exists a strategy $\sigma_1\in\Sigma^M_1$ such that $\forall^{v\in V} \val{\sigma_1}{v}\leq \val{\Gamma}{v}$.
The \emph{(optimal) value} of a vertex $v\in V$ in the \MPG $\Gamma$ is given by:
\[\val{\Gamma}{v} = \sup_{\sigma_0\in\Sigma_0}\val{\sigma_0}{v} = \inf_{\sigma_1\in\Sigma_1}\val{\sigma_1}{v}.\]
Thus, a strategy $\sigma_0\in\Sigma_0$ is \emph{optimal} if $\texttt{val}^{\sigma_0}(v)=\val{\Gamma}{v}$ for all $v\in V$.
We denote $\text{opt}_{\Gamma}\Sigma^M_0\triangleq \big\{\sigma_0\in\Sigma^M_0(\Gamma) \mid \;
	\forall^{v\in V}\, \texttt{val}^{\Gamma}_{\sigma_0}(v) = \val{\Gamma}{v}\big\}$.
A strategy $\sigma_0\in\Sigma_0$ is said to be \emph{winning} for Player~$0$ if $\forall^{v\in V}\texttt{val}^{\sigma_0}(v)\geq 0$,
and $\sigma_1\in\Sigma_1$ is winning for Player~$1$ if $\texttt{val}^{\sigma_1}(v) < 0$.
Correspondingly, a vertex $v\in V$ is a \emph{winning starting position} for
Player~$0$ if $\val{\Gamma}{v}\geq 0$, otherwise it is winning for Player~$1$.
The set of all winning starting positions of Player~$i$ is denoted by $\W_i$ for $i\in \{0,1\}$.

A refined formulation of the determinacy theorem is offered in~\cite{Bjorklund04}.
\begin{Thm}[\cite{Bjorklund04}]\label{thm:ergodic_partition}
Let $\Gamma$ be an \MPG and let $\{C_i\}_{i=1}^m$ be a partition
(called \emph{ergodic}) of its vertices into $m\geq 1$ classes each one having the same optimal value $\nu_i\in\Q$.
Formally, $V=\bigsqcup_{i=1}^m C_i$ and $\forall^{i\in [m]}\forall^{v\in C_i} \val{\Gamma_{i}}{v}=\nu_i$,
where $\Gamma_{i}\triangleq\Gamma_{|_{C_i}}$.

Then, Player~0 has no vertices with outgoing arcs leading from $C_i$ to $C_j$ whenever $\nu_i<\nu_j$,
	and Player~1 has no vertices with outgoing arcs leading from $C_i$ to $C_j$ whenever $\nu_i>\nu_j$;

moreover, there exist $\sigma_0\in\Sigma^M_0$ and $\sigma_1\in\Sigma^M_1$ such that:

-- If the game starts from any vertex in $C_i$,
	then $\sigma_0$ secures a gain at least $\nu_i$ to Player~0 and $\sigma_1$ secures a loss at most $\nu_i$ to Player~1;

-- Any play that starts from $C_i$ always stays in $C_i$, if it is consistent with both strategies $\sigma_0, \sigma_1$,
	\ie if Player~0 plays according to $\sigma_0$, and Player~1 according to $\sigma_1$.
\end{Thm}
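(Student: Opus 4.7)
The plan is to construct the claimed partition directly from the value function of $\Gamma$ and then verify its stated properties using uniform positional optima. I would first invoke the Ehrenfeucht--Mycielski theorem to fix uniform positional optima $\sigma_0^*\in\Sigma^M_0$ and $\sigma_1^*\in\Sigma^M_1$ with $\val{\sigma_0^*}{v}=\val{\sigma_1^*}{v}=\val{\Gamma}{v}$ for every $v\in V$, and define $\{C_i\}_{i=1}^m$ as the equivalence classes of $u\sim v\iff \val{\Gamma}{u}=\val{\Gamma}{v}$, with $\nu_i$ the common value of $C_i$.

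The heart of the proof is a Bellman-style one-step exchange establishing the arc-structure. Suppose, for contradiction, some $v\in V_0\cap C_i$ admits an arc $(v,u)\in E$ with $u\in C_j$ and $\nu_j>\nu_i$. By the definition of the value, Player~0 has a (possibly history-dependent) strategy $\sigma_0^u\in\Sigma_0$ securing $\nu_j$ from $u$. The restart strategy $\tilde\sigma_0$ that first plays $v\to u$ and thereafter mimics $\sigma_0^u$ on the remaining history is well-defined; for every $\sigma_1\in\Sigma_1$, its outcome from $v$ is $v$ followed by an outcome from $u$ under $\sigma_0^u$ against a shifted opponent, so $\liminf$-average invariance under finite prefixes yields $\val{\tilde\sigma_0}{v}\geq\nu_j>\nu_i=\val{\Gamma}{v}$, contradicting $\val{\Gamma}{v}=\sup_{\sigma_0\in\Sigma_0}\val{\sigma_0}{v}$. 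A symmetric restart argument for Player~1 rules out arcs from $C_i$ into strictly lower-value classes.

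Combined with the optimality of $\sigma_0^*$ and $\sigma_1^*$, this forces $\sigma_0^*(v)\in C_i$ for every $v\in V_0\cap C_i$ and $\sigma_1^*(v)\in C_i$ for every $v\in V_1\cap C_i$: an optimal arc cannot point into a strictly smaller- (resp.\ larger-) value class, and arcs in the opposite direction have just been excluded. Hence $\Gamma_i\triangleq\Gamma_{|_{C_i}}$ is sink-free and every play consistent with both $\sigma_0^*,\sigma_1^*$ that starts in $C_i$ stays trapped inside $C_i$; since any opposing strategy on $\Gamma_i$ extends to $\Gamma$ with identical outcome from $v\in C_i$, the memoryless restrictions $\sigma_0^*|_{\Gamma_i}$ and $\sigma_1^*|_{\Gamma_i}$ still secure $\nu_i$ there, establishing $\val{\Gamma_i}{v}=\nu_i$ and simultaneously delivering the \emph{moreover} clause.

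The main obstacle I anticipate is the Bellman exchange step itself: it must be phrased using the general (possibly history-dependent) $\sup_{\sigma_0\in\Sigma_0}$ definition of the value, because a naive positional rewrite of $\sigma_0^*$ at $v$ can produce plays that revisit $v$ and enter cycles of mean payoff below $\nu_j$; only after switching to memoryful restart strategies does finite-prefix invariance close the argument cleanly. Once that subtlety is isolated, the remaining bookkeeping (sink-freeness of $\Gamma_i$, outcome equality under restriction, symmetric Player~1 side) is routine.
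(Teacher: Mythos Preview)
The paper does not prove this theorem; it is quoted from \cite{Bjorklund04} as a background determinacy refinement and invoked later only to justify restricting attention to $\nu$-valued arenas. There is thus no in-paper argument to compare against.

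Your sketch is the standard route and is sound. One small mismatch of phrasing: as stated, the theorem takes the partition (together with the subgame condition $\val{\Gamma_i}{v}=\nu_i$) as given, whereas you build the partition from $\val{\Gamma}{\cdot}$ and then \emph{derive} $\val{\Gamma_i}{v}=\nu_i$; this is the intended content and is exactly how the paper uses the result. Your diagnosis of the obstacle is also correct: the one-step improvement must go through the history-dependent $\sup_{\sigma_0\in\Sigma_0}$ definition, because patching $\sigma_0^*$ positionally at $v$ can create a cycle through $v$ of mean below $\nu_j$, and at that point you cannot yet appeal to the arc structure to exclude the play returning to $v$. Once the restart strategy closes that step, the remaining items (optimal positional moves stay in their class, sink-freeness of each $\Gamma_i$, outcome-equivalence under restriction) are routine as you indicate.
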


A finite variant of \MPG{s} is well-known in the literature
\citep{EhrenfeuchtMycielski:1979, ZwickPaterson:1996, brim2011faster},
where the game stops as soon as a cyclic sequence of vertices is traversed.
It turns out that this is equivalent to the infinite game formulation~\citep{EhrenfeuchtMycielski:1979},
in the sense that the values of an \MPG are in a strong relationship
with the average weights of its cycles, as in the next lemma.

\begin{Prop}[Brim,~\etal \cite{brim2011faster}]\label{prop:reachable_cycle}
Let $\Gamma$ be an \MPG. For all $\nu\in\Q$, for all $\sigma_0\in\Sigma^M_0$,
and for all $v\in V$, the value $\texttt{val}^{\sigma_0}(v)$ is greater than $\nu$ \textit{iff} all cycles $C$
reachable from $v$ in the projection graph $G^{\Gamma}_{\sigma_0}$ have an average weight $w(C)/|C|$ greater than $\nu$.
\end{Prop}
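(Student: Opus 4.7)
The statement is an iff, so I will argue the two implications separately. Throughout, let $G_0 := G^{\Gamma}_{\sigma_0}$ denote the projection graph, and note that since Player~0 has been fixed to $\sigma_0$, every play from $v$ consistent with $\sigma_0$ is simply an infinite path in $G_0$, fully controlled by Player~1's choices at the $V_1$-vertices. The core fact I will repeatedly invoke is that, in a finite graph, the long-run average weight of any infinite path is governed by the averages of the simple cycles that it contains.

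\textbf{($\Leftarrow$)} Assume every cycle reachable from $v$ in $G_0$ has average weight $>\nu$. Since it suffices to check simple cycles (any closed walk decomposes into simple cycles, and its average weight is a convex combination of the simple-cycle averages), and since there are only finitely many simple cycles in $G_0$, I can set $\mu := \min\{w(C)/|C| \mid C \text{ simple cycle reachable from } v \text{ in } G_0\} > \nu$. Given any Player~1 strategy $\sigma_1$, consider the play $\pi = v_0 v_1 \ldots$ with $v_0 = v$ equal to $\texttt{outcome}^{\Gamma}(v,\sigma_0,\sigma_1)$. For any $n$, I perform a standard cycle-popping decomposition of the prefix $v_0 v_1 \ldots v_n$: maintain a stack and, whenever a vertex reappears, pop the simple cycle just formed. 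This yields a decomposition of the prefix into $k$ simple cycles (each reachable from $v$ in $G_0$, hence with average weight $\geq \mu$) plus a residual simple path of length at most $|V|-1$. Summing weights gives
\[
\sum_{i=0}^{n-1} w(v_i,v_{i+1}) \;\geq\; \mu \cdot \bigl(n-(|V|-1)\bigr) - (|V|-1)\,W,
\]
so dividing by $n$ and letting $n\to\infty$ yields $\texttt{MP}_0(\pi) \geq \mu > \nu$. Since this holds for every $\sigma_1 \in \Sigma_1$, taking the infimum gives $\texttt{val}^{\sigma_0}(v) \geq \mu > \nu$.

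\textbf{($\Rightarrow$)} I prove the contrapositive: if some cycle $C$ reachable from $v$ in $G_0$ has $w(C)/|C| \leq \nu$, then $\texttt{val}^{\sigma_0}(v) \leq \nu$. Fix such a $C$ and a finite path $P$ from $v$ to some vertex $u\in C$ in $G_0$. I will exhibit a Player~1 strategy $\sigma_1$ (possibly non-positional, which is allowed since $\Sigma_1$ ranges over all strategies in the definition of $\texttt{val}^{\sigma_0}$) whose outcome from $v$ first follows $P$ to $u$ and then loops around $C$ forever. Concretely, at each step of the play, whenever the current vertex belongs to $V_1$, $\sigma_1$ prescribes the successor dictated by the current position along the scheduled infinite walk $P\cdot C\cdot C\cdot C\cdots$; since every edge of that walk lies in $G_0$, the $V_0$-moves are automatically consistent with $\sigma_0$. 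The resulting play eventually cycles around $C$ forever, so its liminf average weight equals $w(C)/|C|\leq\nu$, whence $\texttt{val}^{\sigma_0}(v)\leq\nu$.

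\textbf{Main obstacle.} The only delicate point is the cycle-popping step in ($\Leftarrow$): one must justify that any prefix of a play decomposes into finitely many simple cycles plus a short residual path, and that the average of any closed walk is a convex combination of its constituent simple-cycle averages (so replacing the minimum over all reachable cycles by the minimum over simple cycles is sound). Everything else is bookkeeping, and the ($\Rightarrow$) direction is essentially a reachability witness.
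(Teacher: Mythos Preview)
Your proof is correct. The paper does not actually spell out a proof of this proposition; it only remarks that it ``follows from the memoryless determinacy of MPGs.'' The intended argument there is presumably: once $\sigma_0$ is fixed, Player~1 faces a one-player game on $G_0$, and by positional determinacy one may restrict the infimum defining $\texttt{val}^{\sigma_0}(v)$ to positional $\sigma_1$; each such $\sigma_1$ produces a lasso from $v$ whose long-run average equals the average weight of its terminal cycle, so optimizing over positional $\sigma_1$ amounts to selecting the minimum-mean cycle reachable from $v$ in $G_0$.

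Your route is genuinely different and more elementary: rather than invoking positional determinacy as a black box, you handle the ($\Leftarrow$) direction by a direct cycle-popping estimate and the ($\Rightarrow$) direction by an explicit (history-dependent) Player~1 strategy that drives the play into a witnessing cycle. This buys you a self-contained argument that does not depend on any determinacy theorem, at the cost of carrying out the combinatorial bookkeeping by hand. One cosmetic point: the displayed inequality in your ($\Leftarrow$) step is only literally valid when $\mu\geq 0$, since for $\mu<0$ the step $\mu\cdot L_{\text{cycles}}\geq \mu\,(n-(|V|-1))$ goes the wrong way; but in that case $L_{\text{cycles}}\leq n$ gives $\mu\cdot L_{\text{cycles}}\geq \mu n$ instead, and the conclusion $\liminf_n \tfrac{1}{n}\sum w(v_i,v_{i+1})\geq \mu$ still follows.
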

The proof of Proposition~\ref{prop:reachable_cycle} follows from the memoryless determinacy of \MPG{s}.
We remark that a proposition which is symmetric to Proposition~\ref{prop:reachable_cycle} holds for Player~$1$ as well:
for all $\nu\in\Q$, for all positional strategies $\sigma_1\in\Sigma^M_1$ of Player~$1$, and for all vertices $v\in V$,
the value $\texttt{val}^{\sigma_1}(v)$ is less than $\nu$ \textit{iff} if
all cycles reachable from $v$ in the projection graph $G^{\Gamma}_{\sigma_1}$
have an average weight less than $\nu$. Also,
it is well-known~\citep{brim2011faster, EhrenfeuchtMycielski:1979} that each value $\val{\Gamma}{v}$ is contained within
the following set of rational numbers: \[ S_{\Gamma}=\Big\{ N/D \suchthat D\in [1, |V|],\, N\in [-D\cdot W, D\cdot W] \Big\}.\]
Notice, $|S_{\Gamma}|\leq |V|^2 W$.

The present work focuses on the algorithmics of the following classical problem:

-- \emph{Optimal Strategy Synthesis.} Compute an optimal positional strategy for Player~0 in $\Gamma$.

Also, in Section~\ref{sect:recursive_enumeration} we shall consider the problem of computing the whole $\texttt{opt}_\Gamma\Sigma^M_0$:

-- \emph{Optimal Strategy Enumeration.} Provide a listing\footnote{The listing has to be exhaustive
  (\ie each element is listed eventually) and without repetitions (\ie no element is listed twice).}
  of all the optimal positional strategies of Player~0 in the \MPG $\Gamma$.

\subsection{Energy Games and Small Energy-Progress Measures}
An \emph{Energy Game} (\EG) is a game that is played on an arena $\Gamma$ for infinitely many rounds by two opponents,
where the goal of Player~0 is to construct an infinite play $v_0v_1\ldots v_n\ldots$
such that for some initial \emph{credit} $c\in\N$ the following holds:
$c + \sum_{i=0}^{j}w(v_i, v_{i+1})\geq 0\, \text{, for all } j \geq 0$.
Given an initial credit $c\in\N$, a play $v_0v_1\ldots v_n\ldots$ is \emph{winning} for Player~0 if it satisfies (1),
otherwise it is winning for Player~1. A vertex $v\in V$ is a winning
starting position for Player~0 if there exists an initial credit $c\in\N$
and a strategy $\sigma_0\in\Sigma_0$ such that, for every strategy $\sigma_1\in\Sigma_1$,
the play $\texttt{outcome}^{\Gamma}(v, \sigma_0, \sigma_1)$ is winning for Player~0.
As in the case of \MPG{s}, the \EG{s} are memoryless determined \cite{brim2011faster},
\ie for every $v\in V$, either $v$ is winning for Player~$0$ or $v$ is winning for Player~$1$,
and (uniform) memoryless strategies are sufficient to win the game.
In fact, as shown in the next lemma, the decision problems of \MPG{s} and \EG{s} are intimately related.
\begin{Prop}[\cite{brim2011faster}]\label{prop:relation_MPG_EG}
Let $\Gamma$ be an arena. For all threshold $\nu\in\Q$, for all vertices $v\in V$,
Player~$0$ has a strategy in the \MPG $\Gamma$ that secures value at least $\nu$ from $v$ if and only if,
for some initial credit $c\in\N$, Player~$0$ has a winning strategy from $v$ in the reweighted \EG $\Gamma^{w-\nu}$.
\end{Prop}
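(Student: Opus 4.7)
The plan is to prove the two implications separately, after a preliminary integer reduction. Since $\nu=N/D$ with $\gcd(N,D)=1$ and the paper records the scaling equivalence $\Gamma^{w-\nu}\cong \Gamma^{D\cdot w-N}$, and since scaling every weight by $D$ multiplies all mean-payoff values by $D$, I would first assume without loss of generality that $\nu\in\Z$, keeping weights integral throughout.

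For the $(\Leftarrow)$ direction, suppose Player~0 has a winning strategy $\sigma_0\in\Sigma_0$ in $\Gamma^{w-\nu}$ from $v$ with some initial credit $c\in\N$. For any $\sigma_1\in\Sigma_1$, let $v_0v_1\ldots=\texttt{outcome}^{\Gamma}(v,\sigma_0,\sigma_1)$. The energy condition applied to the reweighted arcs gives $c+\sum_{i=0}^{n-1}(w(v_i,v_{i+1})-\nu)\geq 0$ for all $n\geq 1$, which rearranges to
\begin{equation*}
\frac{1}{n}\sum_{i=0}^{n-1}w(v_i,v_{i+1})\;\geq\;\nu-\frac{c}{n}.
\end{equation*}
Taking $\liminf_{n\to\infty}$ yields $\texttt{MP}_0(v_0v_1\ldots)\geq \nu$, hence $\val{\sigma_0}{v}\geq \nu$, so Player~0 secures value at least $\nu$ in the \MPG $\Gamma$ from $v$.

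For the $(\Rightarrow)$ direction, I would invoke memoryless determinacy of \MPG{s} to fix a uniform positional optimal strategy $\sigma_0\in\Sigma^M_0$ with $\val{\sigma_0}{v}\geq \nu$. Applying Proposition~\ref{prop:reachable_cycle} for every rational $\nu'<\nu$ yields $\val{\sigma_0}{v}>\nu'$, hence every cycle $C$ reachable from $v$ in the projection $G(\sigma_0,\Gamma)$ satisfies $w(C)/|C|>\nu'$. Since the set of candidate cycle means is finite (a subset of $S_\Gamma$), letting $\nu'\nearrow \nu$ upgrades this to $w(C)/|C|\geq \nu$, i.e.\ every such cycle has nonnegative total weight in the reweighted graph $G(\sigma_0,\Gamma^{w-\nu})$. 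A standard walk-decomposition argument then shows that any finite prefix of a play consistent with $\sigma_0$ has cumulative $(w-\nu)$-weight at least the weight of some simple path in $G(\sigma_0,\Gamma^{w-\nu})$, hence bounded below by $-(|V|-1)(W+|\nu|)$. Choosing $c\triangleq (|V|-1)(W+|\nu|)$ as initial credit, $\sigma_0$ also wins the \EG $\Gamma^{w-\nu}$ from $v$.

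The main technical subtlety I expect is bridging the strict inequality in Proposition~\ref{prop:reachable_cycle} to the non-strict hypothesis $\val{\sigma_0}{v}\geq \nu$; this is resolved precisely by the finiteness of the candidate cycle means. The credit bound is then routine, reducing to the classical fact that any walk in a graph without negative cycles has weight at least that of some simple path between its endpoints.
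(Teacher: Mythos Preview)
The paper does not supply a proof of this proposition; it is quoted from \cite{brim2011faster} as a known fact, so there is no ``paper's own proof'' to compare against. Your argument is correct and is essentially the standard one. The $(\Leftarrow)$ direction is immediate once one divides the energy inequality by $n$ and passes to the $\liminf$. For $(\Rightarrow)$, your route via positional determinacy and Proposition~\ref{prop:reachable_cycle} is the natural one, and your handling of the strict-versus-nonstrict inequality---using that only finitely many simple-cycle means occur, so that $w(C)/|C|>\nu'$ for all $\nu'<\nu$ upgrades to $w(C)/|C|\geq\nu$---is exactly the right bridge. The walk-decomposition bound then gives a finite credit, and the preliminary integer reduction through the scaling $\Gamma^{w-\nu}\cong\Gamma^{Dw-N}$ is precisely what the paper's conventions are set up for. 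One cosmetic remark: the finite set of cycle means you invoke is not literally $S_\Gamma$ after rescaling, but the only property you use is finiteness, which holds regardless.
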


In this work we are especially interested in the \emph{Minimum Credit Problem} (\MCP) for \EG{s}:
for each winning starting position $v$, compute the minimum initial credit $c^*=c^*(v)$
such that there exists a winning strategy $\sigma_0\in\Sigma^M_0$ for Player~$0$ starting from $v$.
A fast pseudo-polynomial time deterministic procedure for solving \MCP{s} comes from \cite{brim2011faster}.

\begin{Thm}[\cite{brim2011faster}]\label{Thm:VI}
There exists a deterministic algorithm for solving the MCP within $O(|V|\,|E|\,W)$ pseudo-polynomial time,
on any input \EG $(V, E, w, \langle V_0, V_1\rangle)$.
\end{Thm}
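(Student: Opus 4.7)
The plan is to reduce the \MCP on $\Gamma$ to the computation of the least \emph{Small Energy-Progress Measure} (SEPM), and then to realize that computation as a monotone Kleene iteration which attains the least fixed point within the announced budget. First I would define an SEPM to be any function $f : V \to \{0, 1, \ldots, (|V|-1)W\}\cup\{\top\}$ satisfying, at every $v \in V_0$, the existence of some outgoing arc $(v,v') \in E$ with $f(v) \geq \max\{0,\, f(v') - w(v,v')\}$, and at every $v \in V_1$ the same inequality for \emph{every} outgoing arc. The pointwise order makes the family of SEPM{s} a complete finite lattice, so a least element $f^*$ exists. The correctness claim to establish next is that a vertex $v$ is losing for Player~$0$ \tiff $f^*(v) = \top$, and on the winning region $f^*(v)$ coincides with the minimum initial credit $c^*(v)$ needed from $v$. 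Boundedness of finite values at $(|V|-1)W$ is where memoryless determinacy enters, via the fact that along any prefix of a winning play consistent with an optimal positional strategy the accumulated deficit before the first repetition of a vertex cannot exceed $(|V|-1)W$.

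Once correctness is in place, I would compute $f^*$ by the standard value iteration: initialize $f_0 \equiv 0$, maintain a worklist of vertices at which the progress condition is currently violated, and repeatedly pick such a $v$ and replace $f(v)$ by the least value (capped to $\top$ whenever it would exceed $(|V|-1)W$) that restores progress locally. Monotonicity of the resulting lifting operator guarantees convergence to the least fixed point $f^*$, and each individual lift strictly increases $f(v)$ inside a chain of length $O(|V|W)$, so the total number of lifts summed across all vertices is $O(|V|^2 W)$.

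The remaining and main obstacle is amortizing the per-lift work, so that the total cost becomes $O(|V|\,|E|\,W)$ rather than $O(|V|^2|E|W)$. A naive implementation that re-scans the entire outgoing neighborhood of every predecessor after each lift is too expensive; to avoid this I would maintain, for each $u \in V_1$, a counter $\texttt{cnt}(u)$ of outgoing arcs $(u,u')$ that currently fail the inequality $f(u) \geq \max\{0, f(u') - w(u,u')\}$, and, for each $u \in V_0$, a pointer to one of its best successors. When $f(v)$ rises, only the incoming arcs $(u,v)$ need to be inspected in order to refresh $\texttt{cnt}(u)$ (or to re-aim the pointer when $u \in V_0$) and, if necessary, to re-enqueue $u$. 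Each lift at $v$ then costs $O(|N_\Gamma^{\text{in}}(v)|)$ amortized work, so summing over all lifts yields $\sum_{v \in V} |N_\Gamma^{\text{in}}(v)| \cdot O(|V|W) = O(|V|\,|E|\,W)$, matching the stated bound.
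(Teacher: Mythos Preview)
The paper does not prove this theorem at all: it is quoted from \cite{brim2011faster}, and the only commentary is the sentence immediately following it, identifying the algorithm as the Value-Iteration and pointing to SEPMs as its rationale. Your proposal is therefore not to be compared against a proof in the paper but against the original argument of Brim~\etal, and in that respect it is a faithful reconstruction: define SEPMs, take the least one via Kleene iteration of the monotone lifting operator, bound the chain length at each vertex by $O(|V|W)$, and amortize the per-lift work over incoming arcs to get $O(|V||E|W)$ total.

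One small imprecision worth tightening: for $u\in V_0$ you propose to keep ``a pointer to one of its best successors'' and ``re-aim the pointer'' when that successor's value rises. If re-aiming means scanning $N_\Gamma^{\text{out}}(u)$, this can be triggered once per lift of \emph{any} successor of $u$, not once per lift of $u$ itself, and the amortization you state would no longer hold. The device actually used in \cite{brim2011faster} is symmetric to your $V_1$ counter: for $u\in V_0$ maintain a count of outgoing arcs that \emph{satisfy} the progress inequality; decrement it in $O(1)$ when an incoming notification arrives, enqueue $u$ when the count hits zero, and reset the count with a single scan of $N_\Gamma^{\text{out}}(u)$ only when $u$ is actually lifted. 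With that fix the $O(|N_\Gamma^{\text{in}}(v)|)$-per-lift accounting you claim goes through, and the rest of your sketch is correct.
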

The algorithm mentioned in Theorem~\ref{Thm:VI} is
the \emph{Value-Iteration} algorithm~\citep{brim2011faster}.
Its rationale relies on the notion of \emph{Small Energy-Progress Measures} (SEPM{s}).

\subsection{Energy-Lattices of Small Energy-Progress Measures}
Small-Energy Progress Measures are bounded, non-negative and integer-valued functions that impose local conditions to ensure global properties on the
arena, in particular, witnessing that Player~0 has a way to enforce conservativity (\ie non-negativity of cycles) in the resulting game's graph.
Recovering standard notation, see e.g.~\cite{brim2011faster}, let us denote $\C_\Gamma=\{n\in\N\mid n\leq (|V|-1) W\}\cup\{\top\}$ and let $\preceq$ be the total order on
$\C_\Gamma$ defined as: $x\preceq y$ \textit{iff} either $y=\top$ or $x,y\in\N$ and $x\leq y$.
In order to cast the minus operation to range over $\C_{\Gamma}$, let us consider an operator $\ominus:\C_\Gamma\times\Z\rightarrow \C_\Gamma$ defined as follows:
\[
a\ominus b \triangleq \left\{
	\begin{array}{ll}
		\max(0, a-b), & \text{ if } a\neq \top \text{ and } a-b\leq (|V|-1)W ; \\
		a\ominus b = \top, & \text{ otherwise.} \\
	\end{array}\right.
\]
Given an \EG $\Gamma$ on vertex set $V = V_0 \cup V_1$, a function $f:V\rightarrow\C_\Gamma$ is
a \emph{Small Energy-Progress Measure} (SEPM) for $\Gamma$ if and only if the following two conditions are met:
\begin{enumerate}
\item if $v\in V_0$, then $f(v)\succeq f(v') \ominus w(v,v')$ for \emph{some} $(v, v')\in E$;
\item if $v\in V_1$, then $f(v)\succeq f(v') \ominus w(v,v')$ for \emph{all} $(v, v')\in E$.
\end{enumerate}

The values of a SEPM, \ie the elements of the image $f(V)$, are called the \emph{energy levels} of $f$.
It is worth to denote by $V_f=\{v\in V\mid f(v)\neq\top\}$ the set of vertices having finite energy.
Given a SEPM $f:V\rightarrow \C_\Gamma$ and a vertex $v\in V_0$,
an arc $(v, v')\in E$ is said to be \emph{compatible} with $f$ whenever $f(v)\succeq f(v')\ominus w(v,v')$;
otherwise $(v, v')$ is said to be \emph{incompatible} with $f$. Moreover,
a positional strategy $\sigma_0\in\Sigma^M_0$ is said to be \emph{compatible} with $f$ whenever: $\forall^{v\in V_0}$
if $\sigma_0(v)=v'$ then $(v,v')\in E$ is compatible with $f$; otherwise, $\sigma_0$ is \emph{incompatible} with $f$.

It is well-known that the family of all the SEPMs of a given $\Gamma$ forms a complete (finite) lattice,
which we denote by $\mathcal{E}_\Gamma$ call it the \emph{Energy-Lattice} of $\Gamma$. Therefore, we shall consider:
\[\mathcal{E}_{\Gamma}\triangleq \big(\{f:V\rightarrow \C_\Gamma \mid f \text{ is SEPM of } \Gamma\}, \sqsubseteq),\]
where for any two SEPMs $f,g$ define $f \sqsubseteq g$ \textit{iff} $\forall{v\in V} f(v)\preceq g(v)$.
Notice that, whenever $f$ and $g$ are SEPM{s}, then so is the \emph{minimum function} defined as:
$\forall^{v\in V} h(v)\triangleq \min\{f(v), g(v)\}$.
This fact allows one to consider the \emph{least} SEPM, namely, the unique SEPM $f^*:V\rightarrow \C_\Gamma$ such that,
for any other SEPM $g:V\rightarrow \C_\Gamma$, the following holds: $\forall^{v\in V} f^*(v)\preceq g(v)$.
Thus, $\mathcal{E}_\Gamma$ is a complete lattice. So, $\mathcal{E}_\Gamma$ enjoys of \emph{Knaster–Tarski Theorem},
which states that the set of fixed-points of a monotone function on a complete lattice is again a complete lattice.

Also concerning SEPMs, we shall rely on the following lemmata.
The first one relates SEPMs to the winning region $\W_0$ of Player~0 in \EG{s}.
\begin{Prop}[\cite{brim2011faster}]\label{prop:least_energy_prog_measure}
Let $\Gamma$ be an \EG. Then the following hold.
\begin{enumerate}
\item If $f$ is any SEPM of the \EG $\Gamma$ and $v\in V_{f}$,
then $v$ is a winning starting position for Player~$0$ in the \EG $\Gamma$.
Stated otherwise, $V_f\subseteq \W_0$;
\item If $f^*$ is the least SEPM of the \EG $\Gamma$,
and $v$ is a winning starting position for Player~$0$ in the \EG $\Gamma$, then $v\in V_{f^*}$.
Thus, $V_{f^*}=\W_0$.
\end{enumerate}
\end{Prop}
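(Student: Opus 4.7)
The plan is to prove the two inclusions separately: part (1) by an energy-invariant argument along any play compatible with a strategy extracted from $f$, and part (2) by constructing a finite-valued SEPM from a uniform positional winning strategy of Player~0 and then invoking the minimality of $f^*$.

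For part (1), given any SEPM $f$ and any $v\in V_f$, I first extract a positional strategy $\sigma_0\in\Sigma^M_0$ compatible with $f$ by choosing, at each $u\in V_0$, an outgoing arc $(u,u')\in E$ witnessing the existential SEPM clause; such an arc exists by the very definition of SEPM. Then I show by induction on play length that every play $v_0v_1\ldots$ starting at $v_0=v$ consistent with $\sigma_0$ remains inside $V_f$: at $v_i\in V_0\cap V_f$ the chosen successor satisfies $f(v_{i+1})\ominus w(v_i,v_{i+1})\preceq f(v_i)\neq\top$, which forces $f(v_{i+1})\neq\top$ (else the left side would be $\top$); at $v_i\in V_1\cap V_f$ the universal SEPM clause covers every adversary move identically. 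Once all energy levels along the play are known to be finite, $\ominus$ unfolds as ordinary subtraction saturated at zero, so compatibility gives $f(v_i)\geq f(v_{i+1})-w(v_i,v_{i+1})$, which telescopes to $f(v_0)+\sum_{i=0}^{n-1}w(v_i,v_{i+1})\geq f(v_n)\geq 0$ for every $n$. Taking initial credit $c\triangleq f(v_0)\in\N$ therefore wins from $v$, so $v\in\W_0$.

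For part (2), I would produce, for any $v\in\W_0$, a SEPM $g$ of $\Gamma$ with $g(v)\neq\top$; minimality of $f^*$ then forces $f^*(v)\preceq g(v)\prec\top$, so $v\in V_{f^*}$. Fix a uniform positional winning strategy $\sigma_0\in\Sigma^M_0$ for Player~0 on $\W_0$, which exists by memoryless determinacy of \EG{s} (cf.\ the text preceding Proposition~\ref{prop:relation_MPG_EG}). For $u\in\W_0$, define $g(u)$ as the supremum, over all finite prefixes $u_0u_1\ldots u_k$ with $u_0=u$ consistent with $\sigma_0$, of the deficit $-\sum_{i=0}^{k-1}w(u_i,u_{i+1})$ (the empty prefix gives $0$, so this supremum is non-negative); for $u\notin\W_0$ set $g(u)\triangleq\top$. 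I would then verify three facts: (a) every cycle in the projection graph $G(\sigma_0,\Gamma_{|_{\W_0}})$ has non-negative weight, since otherwise Player~1 could iterate it and eventually overdraw any fixed initial credit, refuting the winning property of $\sigma_0$; hence any deficit-maximizing prefix can be pruned to a simple path, bounding $g(u)\leq(|V|-1)W$ and placing $g(u)\in\C_\Gamma$; (b) at a Player~0 vertex $u\in\W_0$, the arc $(u,\sigma_0(u))$ witnesses $g(u)\succeq g(\sigma_0(u))\ominus w(u,\sigma_0(u))$ directly from the definition of $g$; (c) at a Player~1 vertex $u\in\W_0$, every successor lies in $\W_0$ (otherwise Player~1 would escape to $\W_1$, refuting $u\in\W_0$), and the universal SEPM inequality at $u$ follows by taking the supremum across all successors.

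The principal obstacle lies in part (2), specifically in the non-negative cycle bound and in the closure of $\W_0$ under arbitrary Player~1 moves. The first is resolved by the standard observation that a negative cycle in $G(\sigma_0,\Gamma_{|_{\W_0}})$ could be iterated until it exhausts any fixed initial credit, contradicting that $\sigma_0$ is a uniform winner on $\W_0$. The second is resolved by noting that a Player~1 vertex in $\W_0$ cannot have a losing successor, since Player~1 would simply move there. Together these two observations ensure that $g$ is a genuinely finite-valued SEPM on $\W_0$, closing part (2) via the minimality of $f^*$ in the energy-lattice $\mathcal{E}_\Gamma$.
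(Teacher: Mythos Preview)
The paper does not supply its own proof of this proposition; it is stated as a known result and attributed to \cite{brim2011faster}. Your argument is correct and is the standard one.

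One small point worth making explicit in part~(2): for the SEPM clause at a Player-0 vertex $u\in V_0\cap\W_0$ you implicitly need $\sigma_0(u)\in\W_0$, since otherwise $g(\sigma_0(u))=\top$ and $g(u)\succeq g(\sigma_0(u))\ominus w(u,\sigma_0(u))$ would fail. This is the Player-0 analogue of your closure observation in~(c) and holds for the same reason: were $\sigma_0(u)\in\W_1$, Player~1 would win from $\sigma_0(u)$ against every initial credit, contradicting that $\sigma_0$ wins from $u$. Once this is noted, all $\sigma_0$-consistent prefixes from $\W_0$ stay in $\W_0$, your non-negative-cycle bound applies, and the rest goes through.
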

The following bound holds on the energy-levels of any SEPM (by definition of $\C_{\Gamma}$).
\begin{Prop}\label{prop:lepm_equals_mincredit}
Let $\Gamma$ be an \EG. Let $f$ be any SEPM of $\Gamma$.

Then, for every $v\in V$ either $f(v)=\top$ or $0\leq f(v)\leq (|V|-1)W$.
\end{Prop}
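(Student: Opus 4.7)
The plan is essentially to unpack definitions: the statement follows directly from the way the codomain $\C_\Gamma$ of a SEPM was set up, together with the closure property of the $\ominus$ operator. No induction, no fixed-point argument, and no appeal to Knaster--Tarski is required.

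First I would recall that, by definition of SEPM, a function $f$ is a SEPM of $\Gamma$ only if its codomain is $\C_\Gamma$, namely $f:V\to \C_\Gamma$. Then I would recall that
\[
\C_\Gamma = \{n\in\N \mid n\leq (|V|-1)W\}\cup\{\top\}.
\]
Fixing an arbitrary $v\in V$, we have $f(v)\in \C_\Gamma$, so either $f(v)=\top$ (in which case we are done) or $f(v)\in\N$ with $f(v)\leq (|V|-1)W$. Since $f(v)\in\N$ implies $f(v)\geq 0$, the stated bound $0\leq f(v)\leq (|V|-1)W$ holds in the second case as well.

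The only thing that might look like a real step is observing that the SEPM conditions $f(v)\succeq f(v')\ominus w(v,v')$ do not force $f$ to ever leave $\C_\Gamma$: this is guaranteed by the very definition of the truncated subtraction $\ominus$, which maps $\C_\Gamma\times \Z$ into $\C_\Gamma$ (it outputs $\top$ whenever the naive subtraction would overshoot $(|V|-1)W$, and it clamps at $0$ from below via the $\max(0,\cdot)$). So the SEPM constraints are always internally consistent with the codomain choice, and there is nothing more to verify.

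I do not foresee any real obstacle; the ``proof'' is essentially a pointer back to the definitions of $\C_\Gamma$, of SEPM, and of $\ominus$ given immediately above the statement. In the write-up I would keep it to two or three sentences and simply cite the definition of $\C_\Gamma$.
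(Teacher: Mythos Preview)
Your proposal is correct and matches the paper's own justification exactly: the paper does not even give a formal proof but simply remarks ``by definition of $\C_{\Gamma}$'' in the line introducing the proposition. Your unpacking of that definition is precisely what is intended, and nothing further is needed.
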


\section{Optimal Strategies from Reweightings}\label{section:values}

It is now recalled a sufficient condition, for a positional strategy to be optimal, which is expressed in terms of reweighted \EG{s} and their SEPM{s}.
\begin{Thm}[\cite{CR16}]\label{Thm:pos_opt_strategy} Let $\Gamma = (V , E, w, \langle V_0, V_1 \rangle)$ be an \MPG.
For each $u\in V$, consider the reweighted \EG $\Gamma_u \cong \Gamma^{w-\val{\Gamma}{u}}$.
Let $f_{u}:V\rightarrow\C_{\Gamma_u}$ be any SEPM of $\Gamma_u$ such that $u\in V_{f_u}$ (\ie $f_u(u)\neq\top$).
Moreover, we assume: $f_{u_1}=f_{u_2}$ whenever $\val{\Gamma}{u_1}=\val{\Gamma}{u_2}$.

When $u\in V_0$, let $v_{f_u}\in N_{\Gamma}^{\text{out}}(u)$ be any vertex such that $(u, v_{f_u})\in E$ is compatible with $f_u$ in \EG $\Gamma_u$,
and consider the positional strategy $\sigma^*_0\in\Sigma^M_{0}$ defined as follows: $\forall^{u\in V_0}\, \sigma^*_0(u) \triangleq v_{f_u}$.

Then, $\sigma^*_0$ is an optimal positional strategy for Player~$0$ in the \MPG $\Gamma$.
\end{Thm}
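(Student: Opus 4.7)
The plan is to invoke Proposition~\ref{prop:reachable_cycle}: to prove that $\sigma_0^*$ is optimal it suffices to show, for every starting vertex $v_0\in V$, that every cycle $C$ reachable from $v_0$ in the projection graph $G(\sigma_0^*,\Gamma)$ has average weight $w(C)/|C|\geq \val{\Gamma}{v_0}$. The hypothesis $f_{u_1}=f_{u_2}$ whenever $\val{\Gamma}{u_1}=\val{\Gamma}{u_2}$ is crucial: it turns the family $\{f_u\}_{u\in V}$ into a single well-defined SEPM $f_\nu$ per value class $V_\nu\triangleq\{u\in V\mid \val{\Gamma}{u}=\nu\}$, namely a SEPM of the common reweighted \EG $\Gamma^{w-\nu}$, and the assumption $u\in V_{f_u}$ immediately propagates to the inclusion $V_\nu\subseteq V_{f_\nu}$.

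First I would show that $\sigma_0^*$ preserves value classes: if $u\in V_0\cap V_\nu$ and $v\triangleq\sigma_0^*(u)=v_{f_u}$, then the compatibility relation $f_\nu(u)\succeq f_\nu(v)\ominus(w(u,v)-\nu)$, combined with $f_\nu(u)\neq\top$, forces $f_\nu(v)\neq\top$, which by Propositions~\ref{prop:least_energy_prog_measure} and~\ref{prop:relation_MPG_EG} yields $\val{\Gamma}{v}\geq \nu$. Theorem~\ref{thm:ergodic_partition} forbids Player~$0$ from having arcs out of $V_\nu$ into a strictly higher class, so $\val{\Gamma}{v}=\nu$ and $\sigma_0^*(V_0\cap V_\nu)\subseteq V_\nu$. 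The same ergodic partition also forbids Player~$1$ from having arcs out of $V_\nu$ into strictly lower classes; hence the map $u\mapsto\val{\Gamma}{u}$ is non-decreasing along every play consistent with $\sigma_0^*$, and any reachable cycle $C$ lies entirely inside a single class $V_{\nu'}$ with $\nu'\geq\val{\Gamma}{v_0}$.

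The final step is a telescoping argument on such a cycle $C\subseteq V_{\nu'}$, carried out in the integral reweighted arena $\Gamma^{Dw-N}\cong\Gamma^{w-\nu'}$ with $\nu'=N/D$. Every Player~$0$ arc of $C$ is compatible with $f_{\nu'}$ by construction of $\sigma_0^*$, every Player~$1$ arc of $C$ is compatible by definition of SEPM, and since $C\subseteq V_{f_{\nu'}}$ all relevant energy levels are finite, so each compatibility inequality unfolds to $f_{\nu'}(y)\leq f_{\nu'}(x)+Dw(x,y)-N$ for $(x,y)\in C$. Summing around $C$ cancels the $f_{\nu'}$-terms and gives $0\leq D\,w(C)-|C|\,N$, i.e., $w(C)/|C|\geq \nu'\geq \val{\Gamma}{v_0}$, as required. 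The main obstacle I expect is the first step, namely establishing $\sigma_0^*(V_0\cap V_\nu)\subseteq V_\nu$: without the uniform hypothesis $f_{u_1}=f_{u_2}$ there would be no reason for $\sigma_0^*(u)$ to remain in $V_\nu$, and Player~$1$ could then steer the play into a cycle of a different value class with smaller average weight; both Theorem~\ref{thm:ergodic_partition} and the finite-support structure of the $\ominus$ operator on $\C_\Gamma$ are needed to rule this out.
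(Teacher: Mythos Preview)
Your proposal is correct. The argument proceeds along the natural lines: reduce optimality to a cycle-average-weight bound via Proposition~\ref{prop:reachable_cycle}; use the uniformity hypothesis to obtain one SEPM $f_\nu$ per value class with $V_\nu\subseteq V_{f_\nu}$; combine compatibility of $\sigma_0^*$ with Propositions~\ref{prop:relation_MPG_EG}--\ref{prop:least_energy_prog_measure} and Theorem~\ref{thm:ergodic_partition} to confine $\sigma_0^*$-consistent plays to non-decreasing value classes, so every reachable cycle sits in a single $V_{\nu'}$ with $\nu'\geq\val{\Gamma}{v_0}$; then telescope the compatibility inequalities (which unfold to $f_{\nu'}(y)\leq f_{\nu'}(x)+Dw(x,y)-N$ because all relevant energy levels are finite) around the cycle. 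Each step is sound.

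As for comparison with the paper: the paper does not give an in-text proof of this theorem, it simply defers to \cite{CR16}. Your argument is exactly the kind of proof one expects there --- the ergodic partition to localise cycles in a single value class, followed by the potential-telescoping trick --- so there is nothing substantive to contrast. One minor remark: your closing sentence suggests that Theorem~\ref{thm:ergodic_partition} is needed to rule out $\sigma_0^*(u)$ landing in a \emph{lower} value class, but in fact you already ruled that out via $f_\nu(v)\neq\top\Rightarrow\val{\Gamma}{v}\geq\nu$; Theorem~\ref{thm:ergodic_partition} is what blocks the \emph{higher} direction for Player~$0$ and the \emph{lower} direction for Player~$1$.
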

\begin{proof}
See the proof of [Theorem~4 in \cite{CR16}].
\end{proof}

\begin{Rem}\label{Rem:pos_opt_strategy} Notice that Theorem~\ref{Thm:pos_opt_strategy} holds, particularly,
when $f_u$ is the least SEPM $f^*_u$ of the reweighted \EG $\Gamma_u$.
This is because $u\in V_{f^*_u}$ always holds for the least SEPM $f^*_u$ of the \EG $\Gamma_u$:
indeed, by Proposition~\ref{prop:relation_MPG_EG} and by definition of $\Gamma_u$,
then $u$ is a winning starting position for Player~0 in the \EG $\Gamma_u$ (for some initial credit);
thus, by Proposition~\ref{prop:least_energy_prog_measure}, it follows that $u\in V_{f^*_u}$.
\end{Rem}

\section{An Energy-Lattice Decomposition of $\texttt{opt}_\Gamma\Sigma^M_0$}\label{sect:energy}
Recall the example arena $\Gamma_{\text{ex}}$ shown in \figref{fig:ex1_arena}.
It is easy to see that $\forall^{v\in V} \val{\Gamma_{\text{ex}}}{v}=-1$.
Indeed, $\Gamma_{\text{ex}}$ contains only two cycles, \ie $C_L=[A,B,C,D]$ and $C_R=[F,G]$,
also notice that $w(C_L)/C_L=w(C_R)/C_R=-1$. The least-SEPM $f^*$ of the reweighted \EG $\Gamma_{\text{ex}}^{w+1}$
can be computed by running a Value Iteration~\citep{brim2011faster}.
Taking into account the reweighting $w\leadsto w+1$, as in \figref{fig:ex1_reweighted_leastSEPM}:
$f^*(A)=f^*(E)=f^*(G)=0$, $f^*(B)=f^*(D)=f^*(F)=4$, and $f^*(C)=8$.
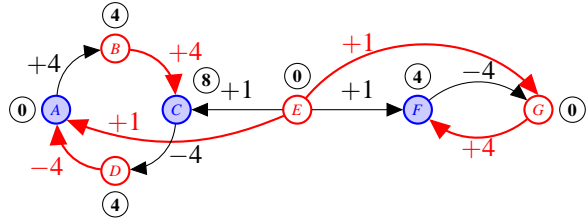
\begin{figure}[!h] \center
\begin{tikzpicture}[scale=.6, arrows={-triangle 45}, node distance=1.5 and 2]
 		\node[node, thick, color=red, label={$\sizedcircled{.15ex}{0}$}] (E) {$E$};
		\node[node, thick, color=blue, fill=blue!20, left=of E,
					label={above right:$\sizedcircled{.15ex}{8}$}] (C) {$C$};
		\node[node, thick, color=red, above=of C, xshift=-8.5ex, yshift=-5ex,
					label={$\sizedcircled{.15ex}{4}$}] (B) {$B$};
		\node[node, thick, color=blue, fill=blue!20, left=of C,
					label={left:$\sizedcircled{.15ex}{0}$}] (A) {$A$};
		\node[node, thick, color=red, below=of C, xshift=-8.5ex, yshift=5ex,
					label={below:$\sizedcircled{.15ex}{4}$}] (D) {$D$};
		\node[node, thick, color=blue, fill=blue!20, right=of E,
					label={$\sizedcircled{.15ex}{4}$}] (F) {$F$};
		\node[node, thick, color=red, right=of F,
					label={right:$\sizedcircled{.15ex}{0}$}] (G) {$G$};
		\draw[] (E) to [bend left=0] node[above] {$+1$} (C);
		\draw[] (E) to [bend left=0] node[above] {$+1$} (F);
		\draw[color=red, thick] (E) to [bend left=22]
					node[above left, xshift=-2ex] {$+1$} (A.south east);
		\draw[color=red, thick] (E) to [bend left=50]
					node[above left, xshift=-3ex, yshift=-1.5ex] {$+1$} (G.north);
		\draw[] (A) to [bend left=40] node[left] {$+4$} (B);
		\draw[color=red, thick] (B) to [bend left=40] node[xshift=2ex, yshift=1ex] {$+4$} (C);
		\draw[] (C) to [bend left=40] node[xshift=2ex, yshift=0ex] {$-4$} (D);
		\draw[color=red, thick] (D) to [bend left=40] node[xshift=-2ex, yshift=-1ex] {$-4$} (A);
		\draw[] (F) to [bend left=40] node[above, yshift=-.75ex] {$-4$} (G);
		\draw[color=red, thick] (G) to [bend left=40] node[below, yshift=.75ex] {$+4$} (F);
\end{tikzpicture}
\caption{The least-SEPM $f^*$ of $\Gamma_{\text{ex}}^{w+1}$ (energy-levels are depicted in circled boldface).
All and only those arcs of Player~0 that are compatible with $f^*$
are $(B,C), (D,A), (E,A), (E,G), (G,F)$ (thick red arcs).}\label{fig:ex1_reweighted_leastSEPM}
\end{figure}

So, $\Gamma_{\text{ex}}$ (\figref{fig:ex1_reweighted_leastSEPM}) implies the following.
\begin{Prop}\label{prop:counter_example}
The converse statement of Theorem~\ref{Thm:pos_opt_strategy} doesn't hold;
there exist infinitely many \MPG{s} $\Gamma$ having at least one $\sigma_0\in\texttt{opt}_{\Gamma}\Sigma^M_0$
which is not compatible with the least-SEPM of $\Gamma$.
\end{Prop}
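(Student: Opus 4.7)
The plan is to use the arena $\Gamma_{\text{ex}}$ of Figure~\ref{fig:ex1_arena} as a single concrete counterexample, and then amplify it into an infinite family by taking, for each $n\geq 1$, the disjoint union of $n$ copies of $\Gamma_{\text{ex}}$; these disjoint unions are pairwise non-isomorphic and each inherits the counterexample property component-wise.

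First I would verify that $\val{\Gamma_{\text{ex}}}{v}=-1$ for every $v\in V$. The underlying graph contains only the two simple cycles $C_L=[A,B,C,D]$ and $C_R=[F,G]$, each of average weight $-1$; since every infinite play is eventually absorbed into one of them, Proposition~\ref{prop:reachable_cycle} applied to each player yields $\val{\Gamma_{\text{ex}}}{v}=-1$ for every starting vertex.

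Next I would enumerate $\texttt{opt}_{\Gamma_{\text{ex}}}\Sigma^M_0$ by direct inspection. All three Player-$1$ vertices $A,C,F$ and all Player-$0$ vertices except $E$, namely $B,D,G$, admit a unique outgoing arc, so a positional Player-$0$ strategy is determined solely by its value $\sigma_0(E)\in\{A,C,F,G\}$. In each of the four resulting projection graphs $G(\sigma_0,\Gamma_{\text{ex}})$ the only reachable cycles are $C_L$ and/or $C_R$, both of average $-1$, so by Proposition~\ref{prop:reachable_cycle} we obtain $\texttt{val}^{\sigma_0}(v)=-1=\val{\Gamma_{\text{ex}}}{v}$ for every $v$; hence all four positional strategies are optimal.

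Finally I would compare with the least-SEPM $f^*$ of $\Gamma_{\text{ex}}^{w+1}$ displayed in Figure~\ref{fig:ex1_reweighted_leastSEPM}. A routine check of the inequality $f^*(E)\succeq f^*(v')\ominus w'(E,v')$ along each outgoing arc of $E$ shows that only $(E,A)$ and $(E,G)$ are compatible, while $(E,C)$ and $(E,F)$ fail (since $f^*(C)\ominus 1=7\not\preceq 0=f^*(E)$ and $f^*(F)\ominus 1=3\not\preceq 0$). Thus the optimal strategy with $\sigma_0(E)=C$ is incompatible with $f^*$, which together with the disjoint-union construction proves the proposition. The only mildly non-trivial step is certifying that the function exhibited in Figure~\ref{fig:ex1_reweighted_leastSEPM} really is the \emph{least} SEPM of $\Gamma_{\text{ex}}^{w+1}$ and not merely some SEPM; this however follows from one pass of Value Iteration as in Theorem~\ref{Thm:VI}, whose output is the least fixed point in the complete lattice $\mathcal{E}_{\Gamma_{\text{ex}}^{w+1}}$.
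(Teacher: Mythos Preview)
Your proposal is correct and follows essentially the same approach as the paper: both use $\Gamma_{\text{ex}}$, observe that only vertex $E$ offers Player~0 a genuine choice, verify that all four outgoing arcs at $E$ yield optimal strategies, and then check that $(E,C)$ and $(E,F)$ are incompatible with the least-SEPM $f^*$ of $\Gamma_{\text{ex}}^{w+1}$. The only cosmetic differences are that the paper picks $\sigma_0(E)=F$ as its witness (you pick $\sigma_0(E)=C$; both work), and the paper merely remarks that one can ``easily turn $\Gamma_{\text{ex}}$ into a family of infinitely many similar examples'' whereas you make this explicit via disjoint unions.
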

\begin{proof} Consider the $\Gamma_{\text{ex}}$ of \figref{fig:ex1_reweighted_leastSEPM},
and the least-SEPM $f^*$ of the \EG $\Gamma_{\text{ex}}^{w+1}$. The only vertex at which Player~0 really has a choice is $E$.
Every arc going out of $E$ is optimal in the \MPG $\Gamma_{\text{ex}}$:
whatever arc $(E,X)\in\E$ (for any $X\in \{A,C,F,G\}$) Player~0 chooses at $E$, the resulting payoff
equals $\val{\Gamma_{\text{ex}}}{E}=-1$. Let $f^*$ be the least-SEPM of $f^*$ in $\Gamma_{\text{ex}}^{w+1}$.
Observe, $(E,C)$ and $(E,F)$ are not compatible with $f^*$ in $\Gamma_{\text{ex}}^{w+1}$, only $(E,A)$ and $(E,G)$ are.
For instance, the positional strategy $\sigma_0\in\Sigma^M_0$ defined
as $\sigma_0(E)\triangleq F$, $\sigma_0(B)\triangleq C$, $\sigma_0(D)\triangleq A$, $\sigma_0(G)\triangleq F$ ensures a payoff
$\forall^{v\in V}\val{\Gamma_{\text{ex}}}{v}=-1$, but it is not compatible with the least-SEPM $f^*$ of
$\Gamma_{\text{ex}}^{w+1}$ (because $f^*(E) = 0 < 3 = f^*(F) \ominus w(E,F)$).
It is easy to turn the $\Gamma_{\text{ex}}$ of \figref{fig:ex1_reweighted_leastSEPM}
into a family on infinitely many similar examples.
\end{proof}

We now aim at strengthening the relationship between $\text{opt}_{\Gamma}\Sigma^M_0$
and the Energy-Lattice $\mathcal{E}_\Gamma$. For this, we assume \textit{wlog}
$\exists^{\nu\in\Q} \forall^{v\in V} \val{\Gamma}{v}=\nu$; this follows from Theorem~\ref{thm:ergodic_partition},
which allows one to partition $\Gamma$ into several domains $\Gamma_i \triangleq \Gamma_{|_{C_i}}$
each one satisfying: $\exists^{\nu_i\in\Q}\forall^{v\in C_i} \val{\Gamma_i}{v}=\nu_i$.
By Theorem~\ref{thm:ergodic_partition} we can study $\texttt{opt}_{\Gamma_i}\Sigma^M_0$,
independently \wrt $\texttt{opt}_{\Gamma_j}\Sigma^M_0$ for $j\neq i$.

We say that an \MPG $\Gamma$ is \emph{$\nu$-valued} if and only if $\exists^{\nu\in\Q}\forall^{v\in V} \val{\Gamma}{v}=\nu$.

Given an \MPG $\Gamma$ and $\sigma_0\in\Sigma^M_0(\Gamma)$, recall,
$G(\Gamma,\sigma_0)\triangleq (V, E', w')$ is obtained from $G^\Gamma$
by deleting all and only those arcs that are not part of $\sigma_0$,
\ie \[E' \triangleq \big\{(u,v)\in E \mid u\in V_0 \text{ and }
		v=\sigma_0(u)\big\} \cup \big\{(u,v)\in E \mid u\in V_1\big\}, \]
where each $e\in E'$ is weighted as in $\Gamma$, \ie $w':E'\rightarrow \Z:e\mapsto w_e$.

When $G=(V,E,w)$ is a weighted directed graph, a \emph{feasible-potential (FP)} for $G$
is any map $\pi:V\rightarrow \C_{G}$ s.t.
$\forall^{u\in V}\forall^{v\in N^{\text{out}}(u)} \pi(u)\succeq \pi(v)\ominus w(u,v)$.
The \emph{least}-FP $\pi^*=\pi^*_{G}$ is the (unique) FP s.t., for any other FP $\pi$,
it holds $\forall^{v\in V} \pi^*(v)\preceq \pi(v)$. Given $G$,
the Bellman-Ford algorithm can be used to produce $\pi^*_{G}$ in $O(|V| |E|)$ time.
Let $\pi^*_{G(\Gamma, \sigma_0)}$ be the \emph{least-FP} of $G(\Gamma, \sigma_0)$.
Notice, for every $\sigma_0\in\Sigma_0^M$, the least-FP $\pi^*_{G(\Gamma, \sigma_0)}$
is actually a SEPM for the \EG $\Gamma$; still it can differ from the least-SEPM of $\Gamma$, due to $\sigma_0$.
We consider the following family of strategies.

\begin{Def}[$\Delta^M_0(f, \Gamma)$-Strategies]
Let $\Gamma=\langle V, E, w, (V_0, V_1) \rangle$ and let $f:V\rightarrow\mathcal{C}_{\Gamma}$ be a SEPM for the \EG $\Gamma$.
Let $\Delta_0^M(f, \Gamma)\subseteq \Sigma^M_0(\Gamma)$ be the family of all
and only those positional strategies of Player~0 in $\Gamma$
s.t. $\pi^{*}_{G(\Gamma,\sigma_0)}$ coincides with $f$ pointwisely,
\ie \[\Delta_0^M(f, \Gamma)\triangleq\Big\{\sigma_0\in\Sigma_{0}^M(\Gamma)
	\left|\right. \forall^{ v\in V } \, \pi^*_{G(\Gamma,\sigma_0)}(v)=f(v)\Big\}.\]
\end{Def}

We now aim at exploring further on the relationship between $\mathcal{E}_\Gamma$ and
$\texttt{opt}_{\Gamma}\Sigma^M_0$, via $\Delta_0^M(f, \Gamma)$.
\begin{Def}[The Energy-Lattice of $\texttt{opt}_{\Gamma}\Sigma^M_0$]
Let $\Gamma$ be a $\nu$-valued \MPG. Let $\mathcal{X} \subseteq \mathcal{E}_{\Gamma^{w-\nu}}$ be a sublattice
of SEPM{s} of the reweighted \EG $\Gamma^{w-\nu}$.

We say that $\mathcal{X}$ is an ``\emph{Energy-Lattice} of $\texttt{opt}_{\Gamma}\Sigma^M_0$"
\textit{iff} $\forall^{f\in\mathcal{X}} \Delta_0^M(f, \Gamma^{w-\nu})\neq\emptyset$ and the following disjoint-set
decomposition holds: \[\displaystyle \texttt{opt}_{\Gamma}\Sigma^M_0 =
				\bigsqcup_{f\in\mathcal{X}} \Delta_0^M(f, \Gamma^{w-\nu}).\]
\end{Def}

\begin{Lem}\label{lem:pre_main_thm}
Let $\Gamma$ be a $\nu$-valued \MPG, and let $\sigma^*_0\in\texttt{opt}_{\Gamma}\Sigma^M_0$. Then,
$G(\Gamma^{w-\nu},\sigma^*_0)$ is \emph{conservative} (\ie it contains \emph{no} negative cycle).
\end{Lem}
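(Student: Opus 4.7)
The plan is to derive conservativity of $G(\Gamma^{w-\nu},\sigma^*_0)$ directly from the cycle-average characterization of secured values (Proposition~\ref{prop:reachable_cycle}), combined with the definition of the reweighting operator.

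First I would unwind the hypotheses. Since $\Gamma$ is $\nu$-valued and $\sigma^*_0\in\texttt{opt}_\Gamma\Sigma^M_0$, we have $\texttt{val}^{\sigma^*_0}(v)=\val{\Gamma}{v}=\nu$ for every $v\in V$. I now claim that every cycle $C$ appearing in $G^\Gamma_{\sigma^*_0}=G(\Gamma,\sigma^*_0)$ satisfies $w(C)/|C|\geq \nu$. Indeed, pick any vertex $v$ lying on $C$; then $C$ is certainly reachable from $v$ in $G^\Gamma_{\sigma^*_0}$. If we had $w(C)/|C|<\nu$, then choose any rational $\nu'$ with $w(C)/|C|\leq \nu'<\nu$. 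The cycle $C$ would be a cycle reachable from $v$ whose average weight is not strictly greater than $\nu'$, so by the contrapositive of Proposition~\ref{prop:reachable_cycle} we would get $\texttt{val}^{\sigma^*_0}(v)\leq \nu'<\nu$, contradicting optimality.

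Next I would translate this into the reweighted arena. By definition of the rescaling convention in Section~\ref{sect:background}, writing $\nu=N/D$ with $\gcd(N,D)=1$ and $D\geq 1$, we have $\Gamma^{w-\nu}\cong \Gamma^{D\cdot w-N}$, so every arc $e$ carries weight $D\cdot w_e-N$ in $G(\Gamma^{w-\nu},\sigma^*_0)$. Since projecting by $\sigma^*_0$ commutes with reweighting, every cycle $C$ of $G(\Gamma^{w-\nu},\sigma^*_0)$ is exactly a cycle of $G(\Gamma,\sigma^*_0)$, and its total weight in the reweighted graph equals
\[
\sum_{e\in C}(D\cdot w_e-N)\;=\;D\cdot w(C)-N\cdot |C|.
\]
From the previous step, $w(C)/|C|\geq \nu=N/D$ and $D>0$, hence $D\cdot w(C)-N\cdot |C|\geq 0$. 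Thus no cycle of $G(\Gamma^{w-\nu},\sigma^*_0)$ has negative weight, which is precisely the assertion that $G(\Gamma^{w-\nu},\sigma^*_0)$ is conservative.

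I do not expect any serious obstacle here: the only delicate point is the passage from a strict inequality in Proposition~\ref{prop:reachable_cycle} (``$>\nu'$'') to the weak inequality needed on $w(C)/|C|$, which is handled by inserting an intermediate rational threshold $\nu'$ between the hypothetical cycle mean and $\nu$. Once that is done the remaining step is a routine rewriting using the definition of $\Gamma^{w-\nu}$.
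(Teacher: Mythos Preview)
Your proof is correct and follows essentially the same route as the paper's: apply Proposition~\ref{prop:reachable_cycle} to conclude that every cycle in the projection graph $G(\Gamma,\sigma^*_0)$ has average $w$-weight at least $\nu$, then subtract $\nu$ from each arc weight. You are in fact more careful than the paper on two points it glosses over---the passage from the strict inequality in Proposition~\ref{prop:reachable_cycle} to the weak inequality $w(C)/|C|\geq\nu$ (which you handle via an intermediate threshold $\nu'$), and the integer rescaling $\Gamma^{w-\nu}\cong\Gamma^{D\cdot w-N}$ (the paper simply computes with $w'=w-\nu$ as if it were integral).
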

\begin{proof}
Let $C \triangleq (v_1\ldots, v_{k}, v_1)$ by any cycle in $G(\Gamma^{w-\nu},\sigma^*_0)$.
Since we have $\sigma^*_0\in\texttt{opt}_\Gamma\Sigma^M_0$ and $\forall^{v\in V} \val{\Gamma}{v}=\nu$, thus
$w(C)/k=\frac{1}{k}\sum_{i=1}^{k} w(v_i, v_{i+1})
	\geq \nu$ (for $v_{k+1}\triangleq v_1$) by Proposition~\ref{prop:reachable_cycle},
so that, assuming $w'\triangleq w-\nu$, then: $w'(C)/k=\frac{1}{k}\sum_{i=1}^{k}
	\big(w(v_i, v_{i+1})-\nu\big) = w(C)/k - \nu \geq \nu - \nu = 0$.
\end{proof}

Some aspects of the following Proposition~\ref{prop:energy_existence} rely heavily on Theorem~\ref{Thm:pos_opt_strategy}:
the compatibility relation comes again into play.
Moreover, we observe that Proposition~\ref{prop:energy_existence} is equivalent to the following fact,
which provides a sufficient condition for a positional strategy to be optimal.
Consider a $\nu$-valued \MPG $\Gamma$, for some $\nu\in \Q$,
and let $\sigma^*_0\in\texttt{opt}_{\Gamma}\Sigma^M_0$.
Let $\hat{\sigma}_0\in\Sigma^M_0(\Gamma)$ be any (not necessarily optimal) positional strategy for Player~0 in the \MPG $\Gamma$.
Suppose the following holds: \[\forall^{v\in V} \pi^*_{G(\Gamma^{w-\nu}, \hat{\sigma}_0)}(v)=\pi^*_{G(\Gamma^{w-\nu},\sigma^*_0)}(v).\]
Then, by Proposition~\ref{prop:energy_existence}, $\hat{\sigma_0}$
	is an optimal positional strategy for Player~0 in the \MPG $\Gamma$.

We are thus relying on the same \emph{compatibility} relation between $\Sigma^M_0$ and SEPM{s} in reweighted \EG{s}
which was at the \emph{base} of Theorem~\ref{Thm:pos_opt_strategy}, aiming at extending Theorem~\ref{Thm:pos_opt_strategy}
so to describe the whole $\texttt{opt}_{\Gamma}\Sigma^M_0$ (and not just the join/top component of it).

\begin{Prop}\label{prop:energy_existence}
Let the \MPG $\Gamma$ be $\nu$-valued, for some $\nu\in \Q$.

There is at least one Energy-Lattice of $\texttt{opt}_{\Gamma}\Sigma^M_0$:
\[\mathcal{X}^*_{\Gamma}\triangleq \{ \pi^*_{G(\Gamma^{w-\nu},\sigma_0)}
	\mid {\sigma_0\in \texttt{opt}_{\Gamma}\Sigma_0^M}\}.\]
\end{Prop}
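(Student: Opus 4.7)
The plan is to verify, in turn, each clause of the definition of ``Energy-Lattice of $\texttt{opt}_{\Gamma}\Sigma^M_0$'' for the set $\mathcal{X}^*_{\Gamma}$: namely that (i) every $f\in\mathcal{X}^*_{\Gamma}$ is a SEPM of $\Gamma^{w-\nu}$, (ii) $\Delta_0^M(f,\Gamma^{w-\nu})\neq\emptyset$ for each such $f$, and (iii) the equation $\texttt{opt}_{\Gamma}\Sigma^M_0=\bigsqcup_{f\in\mathcal{X}^*_{\Gamma}}\Delta_0^M(f,\Gamma^{w-\nu})$ is a valid disjoint decomposition. The two work-horse ingredients are Lemma~\ref{lem:pre_main_thm} (conservativity of $G(\Gamma^{w-\nu},\sigma_0^*)$ for any optimal $\sigma_0^*$) and the classical Bellman--Ford equivalence between the existence of a finite-valued feasible potential and the absence of negative cycles; together with Proposition~\ref{prop:reachable_cycle} these carry essentially the whole argument.

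First, pick any $\sigma_0^*\in\texttt{opt}_{\Gamma}\Sigma^M_0$ and set $f:=\pi^*_{G(\Gamma^{w-\nu},\sigma_0^*)}$. By Lemma~\ref{lem:pre_main_thm} the projection graph has no negative cycle, hence $f$ is finite-valued everywhere and bounded above by $(|V|-1)W$, so $f(v)\in\mathcal{C}_{\Gamma^{w-\nu}}$ for all $v$. The feasible-potential inequalities of $f$ transfer directly into SEPM inequalities on $\Gamma^{w-\nu}$: at $v\in V_0$ the projection retains the single arc $(v,\sigma_0^*(v))$ and the FP inequality on it witnesses clause~(1) of the SEPM definition; at $v\in V_1$ all outgoing arcs are retained, so clause~(2) is met as well. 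Thus $f\in\mathcal{E}_{\Gamma^{w-\nu}}$, proving (i). Clause (ii) is then immediate, because $\sigma_0^*\in\Delta_0^M(f,\Gamma^{w-\nu})$ by construction.

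For clause (iii), the forward inclusion and disjointness are painless: every $\sigma_0\in\texttt{opt}_{\Gamma}\Sigma^M_0$ lies in $\Delta_0^M(\pi^*_{G(\Gamma^{w-\nu},\sigma_0)},\Gamma^{w-\nu})$, and the least-FP of a graph is unique so two distinct $\Delta_0^M$-classes cannot intersect. The non-trivial direction is the converse: any $\sigma_0\in\Delta_0^M(f,\Gamma^{w-\nu})$ with $f\in\mathcal{X}^*_{\Gamma}$ must itself be optimal in $\Gamma$. The key observation is that $f$ has no $\top$-values (by step~(i)), so $\pi^*_{G(\Gamma^{w-\nu},\sigma_0)}=f$ is a genuine finite-valued feasible potential; by Bellman--Ford this forces $G(\Gamma^{w-\nu},\sigma_0)$ to be conservative, \ie every cycle of $G(\Gamma,\sigma_0)$ has mean weight $\geq\nu$. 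By the mean-payoff / reachable-cycle characterization (Proposition~\ref{prop:reachable_cycle}, used in limit form, since any $\sigma_0$-consistent play eventually loops on such a cycle), $\texttt{val}^{\sigma_0}(v)\geq\nu=\val{\Gamma}{v}$ for every $v$, so $\sigma_0\in\texttt{opt}_{\Gamma}\Sigma^M_0$.

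The main obstacle is precisely this reverse inclusion: the equivalence ``$\sigma_0$ realizes the prescribed finite-valued least-FP $f$'' $\Longleftrightarrow$ ``$\sigma_0$ is optimal'' is not apparent on the face of the definitions, and it hinges on a two-step chain — (a) $f$ is $\top$-free thanks to Lemma~\ref{lem:pre_main_thm} applied to the witnessing $\sigma_0^*$ that generated $f$, and (b) the Bellman--Ford characterization lifts this finiteness back to conservativity of $G(\Gamma^{w-\nu},\sigma_0)$. A secondary point worth handling briefly at the end is that the claim tacitly requires $\mathcal{X}^*_{\Gamma}$ to be a \emph{sublattice} of $\mathcal{E}_{\Gamma^{w-\nu}}$: since the pointwise minimum of two SEPMs is a SEPM, I would verify sublattice-closure by showing that, given optimal $\sigma_0,\sigma_0'$ realising $f,f'\in\mathcal{X}^*_{\Gamma}$, the pointwise $f\wedge f'$ is realised by a patched strategy that at each $v\in V_0$ follows whichever of $\sigma_0,\sigma_0'$ attains the smaller energy level at $v$ — an optimal strategy by the same conservativity argument as above.
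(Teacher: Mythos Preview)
Your argument is correct and structurally close to the paper's, but the one substantive step---the reverse inclusion $\bigsqcup_{f\in\mathcal{X}^*_\Gamma}\Delta_0^M(f,\Gamma^{w-\nu})\subseteq\texttt{opt}_\Gamma\Sigma^M_0$---is handled by a different tool. The paper observes that any $\hat\sigma_0\in\Delta_0^M(\hat f,\Gamma^{w-\nu})$ is \emph{compatible} with the SEPM $\hat f$, and since $V_{\hat f}=V$ (by Lemma~\ref{lem:pre_main_thm} applied to the generating $\sigma_0^*$), it invokes Theorem~\ref{Thm:pos_opt_strategy} directly to conclude optimality. You instead argue that finiteness of $\hat f=\pi^*_{G(\Gamma^{w-\nu},\hat\sigma_0)}$ forces $G(\Gamma^{w-\nu},\hat\sigma_0)$ to be conservative, and then appeal to Proposition~\ref{prop:reachable_cycle} (in its $\geq$ form). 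This is exactly the content of the paper's Lemma~\ref{lem:conservative_implies_optimal}, which appears \emph{after} Proposition~\ref{prop:energy_existence}; so your route is more self-contained and avoids the black box of Theorem~\ref{Thm:pos_opt_strategy}, while the paper's route emphasises that the compatibility relation of Theorem~\ref{Thm:pos_opt_strategy} is what drives the whole decomposition.

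One caveat on your final paragraph: the patching argument for meet-closure shows that $g:=f\wedge f'$ is a feasible potential for $G(\Gamma^{w-\nu},\sigma_0'')$, hence $\pi^*_{G(\Gamma^{w-\nu},\sigma_0'')}\preceq g$, but it does not by itself give equality---the patched strategy could in principle admit a strictly smaller least-FP, in which case $g$ need not lie in $\mathcal{X}^*_\Gamma$ via that witness. The paper does not verify sublattice closure explicitly either (it declares everything but the reverse inclusion ``trivial''), so this is not a discrepancy with the paper so much as a loose end in both; if you want to nail it down you would need to also bound $\pi^*_{G(\Gamma^{w-\nu},\sigma_0'')}$ from below by $g$, or else argue closure differently.
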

\begin{proof}
The only non-trivial point to check being:
$ \bigsqcup_{f\in\mathcal{X}^*_{\Gamma}} \Delta^M_0(f, \Gamma^{w-\nu}) \subseteq \texttt{opt}_\Gamma\Sigma^M_0 $.

For this, we shall rely on Theorem~\ref{Thm:pos_opt_strategy}.
Let $\hat{f}\in \mathcal{X}^*_{\Gamma}$ and $\hat{\sigma}_0\in \Delta^M_0(\hat{f}, \Gamma^{w-\nu})$ be fixed (arbitrarily).
Since $\hat{f}\in\mathcal{X}^*_{\Gamma}$, then
$\hat{f}=\pi^*_{G(\Gamma^{w-\nu}, \sigma^*_0)}$ for some $\sigma^*_0\in\texttt{opt}_\Gamma\Sigma^M_0$.
Therefore, the following holds:
	\[ \pi^*_{G(\Gamma^{w-\nu}, \hat{\sigma}_0)} = \hat{f} = \pi^*_{G(\Gamma^{w-\nu}, \sigma^*_0)}. \]
Clearly, $\hat{\sigma}_0$ is compatible
with $\hat{f}$ in the \EG $\Gamma^{w-\nu}$, because $\hat{f}=\pi^*_{G(\Gamma^{w-\nu}, \hat{\sigma}_0)}$.
By Lemma~\ref{lem:pre_main_thm}, since $\sigma^*_0$ is optimal,
then $G(\Gamma^{w-\nu}, \sigma^*_0)$ is conservative. Therefore:
\[V_{\hat{f}}=V_{\pi^*_{G(\Gamma^{w-\nu}, \sigma^*_0)}}=V.\]
Notice, $\hat{\sigma}_0$ satisfies exactly the hypotheses required by Theorem~\ref{Thm:pos_opt_strategy}.
Therefore, $\hat{\sigma}_0\in\texttt{opt}_\Gamma\Sigma^M_0$.
This proves (*).
This also shows $\texttt{opt}_{\Gamma}\Sigma^M_0 =
	\bigsqcup_{f\in\mathcal{X}^*_{\Gamma}} \Delta^M_0(f, \Gamma^{w-\nu})$, and concludes the proof.
\end{proof}

\begin{Prop}\label{prop:energy_uniqueness}
Let the \MPG $\Gamma$ be $\nu$-valued, for some $\nu\in \Q$. Let ${\mathcal{X}^*_{\Gamma}}_1$ and
${\mathcal{X}^*_{\Gamma}}_2$ be two Energy-Lattices for $\texttt{opt}_{\Gamma}\Sigma^M_0$.
Then, ${\mathcal{X}^*_{\Gamma}}_1={\mathcal{X}^*_{\Gamma}}_2$.
\end{Prop}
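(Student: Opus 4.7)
The plan is to exploit the two defining properties of an Energy-Lattice (nonemptiness of each $\Delta_0^M(f,\Gamma^{w-\nu})$ and the disjoint-set decomposition of $\texttt{opt}_{\Gamma}\Sigma^M_0$) together with the fact that $\pi^*_{G(\Gamma^{w-\nu},\sigma_0)}$ is \emph{uniquely determined} by $\sigma_0$. This already pins down which $f$ a given optimal positional strategy can belong to, and this rigidity forces $\mathcal{X}^*_{\Gamma,1}=\mathcal{X}^*_{\Gamma,2}$.

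First, I would pick an arbitrary $f \in \mathcal{X}^*_{\Gamma,1}$ and use the first defining property of an Energy-Lattice to select a witness $\sigma_0 \in \Delta_0^M(f,\Gamma^{w-\nu})$, which is nonempty by assumption. By the decomposition property of $\mathcal{X}^*_{\Gamma,1}$ we have $\sigma_0 \in \texttt{opt}_{\Gamma}\Sigma^M_0$, and by the definition of $\Delta_0^M$ we have $\pi^*_{G(\Gamma^{w-\nu},\sigma_0)} = f$.

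Next, I would apply the decomposition property of $\mathcal{X}^*_{\Gamma,2}$ to this same $\sigma_0$: since $\texttt{opt}_{\Gamma}\Sigma^M_0 = \bigsqcup_{g \in \mathcal{X}^*_{\Gamma,2}} \Delta_0^M(g,\Gamma^{w-\nu})$, there exists a unique $g \in \mathcal{X}^*_{\Gamma,2}$ with $\sigma_0 \in \Delta_0^M(g,\Gamma^{w-\nu})$, hence $\pi^*_{G(\Gamma^{w-\nu},\sigma_0)} = g$. Since the least-FP of $G(\Gamma^{w-\nu},\sigma_0)$ is unique, we conclude $f = g \in \mathcal{X}^*_{\Gamma,2}$, giving $\mathcal{X}^*_{\Gamma,1} \subseteq \mathcal{X}^*_{\Gamma,2}$. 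A symmetric argument swapping the roles of $\mathcal{X}^*_{\Gamma,1}$ and $\mathcal{X}^*_{\Gamma,2}$ yields the reverse inclusion, proving equality.

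There is no real obstacle here: the only subtlety to keep in mind is that the hypothesis $\Delta_0^M(f,\Gamma^{w-\nu}) \neq \emptyset$ is precisely what lets us turn an abstract element $f$ of the lattice into a concrete strategy which can then be re-classified by the other lattice; without this nonemptiness clause, a lattice could contain ``spurious'' SEPMs not realized by any optimal strategy, and uniqueness would fail. Everything else follows mechanically from the uniqueness of least feasible-potentials and the disjointness of the union.
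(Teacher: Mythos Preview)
Your proof is correct and follows essentially the same approach as the paper's own proof: pick $f$ in one lattice, use nonemptiness of $\Delta_0^M(f,\Gamma^{w-\nu})$ to obtain a witness $\sigma_0\in\texttt{opt}_{\Gamma}\Sigma^M_0$ with $\pi^*_{G(\Gamma^{w-\nu},\sigma_0)}=f$, then locate $\sigma_0$ in the second lattice's decomposition to conclude $f$ lies there too, and finish by symmetry. Your additional remark on why the nonemptiness clause is essential is a nice touch that the paper leaves implicit.
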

\begin{proof}
By symmetry, it is sufficient to prove that ${\mathcal{X}^*_{\Gamma}}_1\subseteq {\mathcal{X}^*_{\Gamma}}_2$. Let $f_1\in{\mathcal{X}^*_{\Gamma}}_1$ be fixed (arbitrarily).
Then, $f_1=\pi^*_{G(\Gamma^{w-\nu}, \hat{\sigma}_0)}$
for some $\hat{\sigma}_0\in\texttt{opt}_{\Gamma}\Sigma^M_0$.
Since $\hat{\sigma}_0\in\texttt{opt}_{\Gamma}\Sigma^M_0$ and since ${\mathcal{X}^*_{\Gamma}}_2$ is an Energy-Lattices,
there exists $f_2\in {\mathcal{X}^*_{\Gamma}}_2$ s.t. $\hat{\sigma}_0\in\Delta^M_0(f_2, \Gamma^{w-\nu})$, which implies
$\pi^*_{G(\Gamma^{w-\nu},\hat{\sigma}_0)}=f_2$. Thus, $f_1= \pi^*_{G(\Gamma^{w-\nu},\hat{\sigma}_0)} = f_2$.
This implies $f_1\in {\mathcal{X}^*_{\Gamma}}_2$.
\end{proof}

The next theorem summarizes the main point of this section.
\begin{Thm}\label{thm:main_energystructure}
Let $\Gamma$ be a $\nu$-valued \MPG, for some $\nu\in \Q$.
Then, $\mathcal{X}^*_{\Gamma}\triangleq \{ \pi^*_{G(\Gamma^{w-\nu},\sigma_0)}
				\mid {\sigma_0\in \texttt{opt}_{\Gamma}\Sigma_0^M}\}$
	is the unique Energy-Lattice of $\texttt{opt}_\Gamma\Sigma^M_0$.
\end{Thm}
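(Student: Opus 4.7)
My plan is to derive the theorem as an immediate corollary of the two preceding propositions, which together already package the existence and the uniqueness of an Energy-Lattice for $\texttt{opt}_\Gamma\Sigma^M_0$. Since the statement asserts that the specific set $\mathcal{X}^*_\Gamma=\{\pi^*_{G(\Gamma^{w-\nu},\sigma_0)}\mid \sigma_0\in\texttt{opt}_\Gamma\Sigma^M_0\}$ is \emph{the} Energy-Lattice, I only need to argue (i) that $\mathcal{X}^*_\Gamma$ is an Energy-Lattice, and (ii) that any two Energy-Lattices of $\texttt{opt}_\Gamma\Sigma^M_0$ must coincide.

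For step (i), I would simply invoke Proposition~\ref{prop:energy_existence}. The nontrivial content of that proposition is exactly what I would otherwise have to redo by hand: given any $\hat{\sigma}_0\in\Delta_0^M(\hat{f},\Gamma^{w-\nu})$ with $\hat{f}=\pi^*_{G(\Gamma^{w-\nu},\sigma^*_0)}$ for some $\sigma^*_0\in\texttt{opt}_\Gamma\Sigma^M_0$, Lemma~\ref{lem:pre_main_thm} applied to $\sigma^*_0$ ensures that $G(\Gamma^{w-\nu},\sigma^*_0)$ is conservative, whence $V_{\hat{f}}=V$; then the compatibility of $\hat{\sigma}_0$ with $\hat{f}$ together with Theorem~\ref{Thm:pos_opt_strategy} places $\hat{\sigma}_0$ in $\texttt{opt}_\Gamma\Sigma^M_0$. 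The complementary inclusion and the disjointness of the decomposition $\bigsqcup_{f\in\mathcal{X}^*_\Gamma}\Delta_0^M(f,\Gamma^{w-\nu})$ are immediate from the definitions of $\mathcal{X}^*_\Gamma$ and $\Delta_0^M(\cdot,\cdot)$: each $\sigma_0\in\texttt{opt}_\Gamma\Sigma^M_0$ sits in exactly one block, indexed by its own $\pi^*_{G(\Gamma^{w-\nu},\sigma_0)}$.

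For step (ii), I would apply Proposition~\ref{prop:energy_uniqueness}: if $\mathcal{Y}$ is any other Energy-Lattice of $\texttt{opt}_\Gamma\Sigma^M_0$, then for each $f_1\in\mathcal{X}^*_\Gamma$ one has $f_1=\pi^*_{G(\Gamma^{w-\nu},\hat{\sigma}_0)}$ for some $\hat{\sigma}_0\in\texttt{opt}_\Gamma\Sigma^M_0$, and by definition of Energy-Lattice there exists $f_2\in\mathcal{Y}$ with $\hat{\sigma}_0\in\Delta_0^M(f_2,\Gamma^{w-\nu})$, which forces $f_1=\pi^*_{G(\Gamma^{w-\nu},\hat{\sigma}_0)}=f_2\in\mathcal{Y}$. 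A symmetric argument yields $\mathcal{Y}\subseteq\mathcal{X}^*_\Gamma$, so $\mathcal{Y}=\mathcal{X}^*_\Gamma$, and uniqueness follows.

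No substantive new work is needed; the only delicate point (already handled inside Proposition~\ref{prop:energy_existence}) is the reverse inclusion $\bigsqcup_{f\in\mathcal{X}^*_\Gamma}\Delta_0^M(f,\Gamma^{w-\nu})\subseteq\texttt{opt}_\Gamma\Sigma^M_0$, which is where the compatibility bridge of Theorem~\ref{Thm:pos_opt_strategy} and the conservativity guarantee of Lemma~\ref{lem:pre_main_thm} do all the heavy lifting. The theorem itself is then just the combination of these two propositions, so the proof reduces to citing them in sequence.
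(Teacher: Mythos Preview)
Your proposal is correct and matches the paper's own proof exactly: the theorem is obtained as an immediate corollary of Proposition~\ref{prop:energy_existence} (existence) and Proposition~\ref{prop:energy_uniqueness} (uniqueness). Your additional unpacking of the ingredients inside those propositions is faithful to their content, but the paper's proof simply cites the two propositions and nothing more.
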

\begin{proof}
By Proposition~\ref{prop:energy_existence} and Proposition~\ref{prop:energy_uniqueness}.
\end{proof}

\begin{Exp}\label{exp1}
Consider the \MPG $\Gamma_{\text{ex}}$, as defined in \figref{fig:ex1_arena}. Then,
$\mathcal{X}^*_{\Gamma_{\texttt{ex}}}=\{f^*, f_1, f_2\}$,
where $f^*$ is the least-SEPM of the reweighted \EG $\Gamma^{w+1}_{\texttt{ex}}$,
and where the following holds:
$f_1(A)=f_2(A)=f^*(A)=0$; $f_1(B)=f_2(B)=f^*(B)=4$; $f_1(C)=f_2(C)=f^*(C)=8$; $f_1(D)=f_2(D)=f^*(D)=4$;
$f_1(F)=f_2(F)=f^*(F)=4$; $f_1(G)=f_2(G)=f^*(G)=0$; finally, $f^*(E)=0$, $f_1(E)=3$, $f_2(E)=7$.
An illustration of $f_1$ is offered in \figref{fig:ex1_reweighted_f1SEPM} (energy-levels are depicted in circled boldface).
whereas $f_2$ is depicted in \figref{fig:ex1_reweighted_f2SEPM}.
Notice that $f^*(v)\leq f_1(v)\leq f_2(v)$ for every $v\in V$,
and this ordering relation is illustrated in \figref{fig:ex1_ordering}.
\end{Exp}

\begin{figure}[!h]
\center
\subfloat[The extremal-SEPM $f_1$ of $\Gamma_{\text{ex}}^{w+1}$]{\label{fig:ex1_reweighted_f1SEPM}
\begin{tikzpicture}[scale=.57, arrows={-triangle 45}, node distance=1.5 and 1.5]
 		\node[node, thick, color=red, label={$\sizedcircled{.15ex}{3}$}] (E) {$E$};
		\node[node, thick, color=blue, fill=blue!20, left=of E, label={above right:$\sizedcircled{.15ex}{8}$}] (C) {$C$};
		\node[node, thick, color=red, above=of C, xshift=-8.5ex, yshift=-5ex, label={$\sizedcircled{.15ex}{4}$}] (B) {$B$};
		\node[node, thick, color=blue, fill=blue!20, left=of C, xshift=-4ex, label={left:$\sizedcircled{.15ex}{0}$}] (A) {$A$};
		\node[node, thick, color=red, below=of C, xshift=-8.5ex, yshift=5ex, label={below:$\sizedcircled{.15ex}{4}$}] (D) {$D$};
		\node[node, thick, color=blue, fill=blue!20, right=of E, label={$\sizedcircled{.15ex}{4}$}] (F) {$F$};
		\node[node, thick, color=red, right=of F, label={right:$\sizedcircled{.15ex}{0}$}] (G) {$G$};
		\draw[] (E) to [bend left=0] node[above, xshift=1ex] {$+1$} (C);
		\draw[color=red, thick] (E) to [bend left=0] node[above] {$+1$} (F);
		\draw[dotted] (E) to [bend left=22] node[above left, xshift=-2ex] {$+1$} (A.south east);
		\draw[dotted] (E) to [bend left=50] node[above left, xshift=-4ex] {$+1$} (G.north);
		\draw[] (A) to [bend left=40] node[left] {$+4$} (B);
		\draw[color=red, thick] (B) to [bend left=40] node[xshift=2ex, yshift=1ex] {$+4$} (C);
		\draw[] (C) to [bend left=40] node[xshift=2ex, yshift=0ex] {$-4$} (D);
		\draw[color=red, thick] (D) to [bend left=40] node[xshift=-2ex, yshift=-1ex] {$-4$} (A);
		\draw[] (F) to [bend left=40] node[above, xshift=-1ex, yshift=2ex] {$-4$} (G);
		\draw[color=red, thick] (G) to [bend left=40] node[below] {$+4$} (F);
\end{tikzpicture}
}
\subfloat[The extremal-SEPM $f_2$ of $\Gamma_{\text{ex}}^{w+1}$.]{\label{fig:ex1_reweighted_f2SEPM}
\begin{tikzpicture}[scale=.57, arrows={-triangle 45}, node distance=1.5 and 1.5]
 		\node[node, thick, color=red, label={$\sizedcircled{.15ex}{7}$}] (E) {$E$};
		\node[node, thick, color=blue, fill=blue!20, left=of E, label={above right:$\sizedcircled{.15ex}{8}$}] (C) {$C$};
		\node[node, thick, color=red, above=of C, xshift=-8.5ex, yshift=-5ex, label={$\sizedcircled{.15ex}{4}$}] (B) {$B$};
		\node[node, thick, color=blue, fill=blue!20, left=of C, xshift=-4ex, label={left:$\sizedcircled{.15ex}{0}$}] (A) {$A$};
		\node[node, thick, color=red, below=of C, xshift=-8.5ex, yshift=5ex, label={below:$\sizedcircled{.15ex}{4}$}] (D) {$D$};
		\node[node, thick, color=blue, fill=blue!20, right=of E, label={$\sizedcircled{.15ex}{4}$}] (F) {$F$};
		\node[node, thick, color=red, right=of F, label={right:$\sizedcircled{.15ex}{0}$}] (G) {$G$};
		\draw[color=red, thick] (E) to [bend left=0] node[above, xshift=1ex] {$+1$} (C);
		\draw[dotted] (E) to [bend left=0] node[above] {$+1$} (F);
		\draw[dotted] (E) to [bend left=22] node[above left, xshift=-2ex] {$+1$} (A.south east);
		\draw[dotted] (E) to [bend left=50] node[above left, xshift=-4ex] {$+1$} (G.north);
		\draw[] (A) to [bend left=40] node[left] {$+4$} (B);
		\draw[color=red, thick] (B) to [bend left=40] node[xshift=2ex, yshift=1ex] {$+4$} (C);
		\draw[] (C) to [bend left=40] node[xshift=2ex, yshift=0ex] {$-4$} (D);
		\draw[color=red, thick] (D) to [bend left=40] node[xshift=-2ex, yshift=-1ex] {$-4$} (A);
		\draw[] (F) to [bend left=40] node[above, xshift=-1ex, yshift=2ex] {$-4$} (G);
		\draw[color=red, thick] (G) to [bend left=40] node[below] {$+4$} (F);
\end{tikzpicture}
}
\end{figure}

\begin{Def}
Each element $f\in\mathcal{X}^*_{\Gamma}$ is called \emph{extremal}-SEPM.
\end{Def}

The next lemma is the converse of Lemma~\ref{lem:pre_main_thm}.
\begin{Lem}\label{lem:conservative_implies_optimal}
Let the \MPG $\Gamma$ be $\nu$-valued, for some $\nu\in \Q$.
Consider any $\sigma_0\in\Sigma_0^M(\Gamma)$, and assume that $G(\Gamma^{w-\nu}, \sigma_0)$ is conservative.
Then, $\sigma_0\in\texttt{opt}_\Gamma\Sigma_0^M$.
\end{Lem}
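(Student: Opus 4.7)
The plan is to reduce the claim to the matching lower bound $\val{\sigma_0}{v}\geq \nu$ for every $v\in V$, since the upper bound $\val{\sigma_0}{v}\leq \val{\Gamma}{v}=\nu$ is immediate from the definition of the value. To obtain the lower bound I would exhibit a feasible-potential on the projected, reweighted graph $G(\Gamma^{w-\nu},\sigma_0)$ and run a telescoping argument along any play consistent with $\sigma_0$.

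First, writing $w'\triangleq w-\nu$, the conservativity hypothesis ensures that the least feasible-potential $\pi^*=\pi^*_{G(\Gamma^{w-\nu},\sigma_0)}$ is finite at every vertex (on a conservative graph Bellman--Ford converges to an integer potential), and in particular it yields the arc inequality $w'(u,v)\geq \pi^*(v)-\pi^*(u)$ for every arc $(u,v)$ of $G(\Gamma^{w-\nu},\sigma_0)$. Next, I would fix an arbitrary $v_0\in V$ and an arbitrary $\sigma_1\in\Sigma_1$, and observe that every arc $(v_i,v_{i+1})$ of the resulting play $p=v_0v_1v_2\ldots=\texttt{outcome}^{\Gamma}(v_0,\sigma_0,\sigma_1)$ lies in $G(\Gamma^{w-\nu},\sigma_0)$, because at Player~$0$'s vertices the move is the one prescribed by $\sigma_0$ while at Player~$1$'s vertices all outgoing arcs are retained. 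Telescoping the arc inequality along $p$ then gives
\[
\sum_{i=0}^{n-1}w(v_i,v_{i+1}) \;=\; n\nu+\sum_{i=0}^{n-1}w'(v_i,v_{i+1}) \;\geq\; n\nu+\pi^*(v_n)-\pi^*(v_0) \;\geq\; n\nu-\pi^*(v_0),
\]
so dividing by $n$ and taking $\liminf$ yields $\texttt{MP}_0(p)\geq \nu$. Since $\sigma_1$ and $v_0$ were arbitrary, $\val{\sigma_0}{v}\geq \nu$ throughout $V$, concluding that $\sigma_0\in\texttt{opt}_\Gamma\Sigma^M_0$.

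The only mild subtlety, which is where the hypothesis really enters, is justifying that $\pi^*(v)\neq\top$ for every $v\in V$: a finite Bellman--Ford potential exists if and only if the graph is conservative, so the hypothesis supplies exactly what is needed. A potential-free alternative would be to invoke Proposition~\ref{prop:reachable_cycle}: conservativity of $G(\Gamma^{w-\nu},\sigma_0)$ translates to $w(C)/|C|\geq \nu$ for every cycle of $G(\Gamma,\sigma_0)$, whence applying Proposition~\ref{prop:reachable_cycle} at each rational $\nu'<\nu$ delivers $\val{\sigma_0}{v}>\nu'$ for all such $\nu'$, and therefore $\val{\sigma_0}{v}\geq \nu$ by passing to the supremum; the telescoping route is however cleaner since it sidesteps the strict inequality in the statement of Proposition~\ref{prop:reachable_cycle}.
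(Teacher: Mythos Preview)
Your proposal is correct, and your primary route (feasible potential plus telescoping along an arbitrary play) is genuinely different from the paper's argument. The paper proceeds exactly along the lines of what you call the ``potential-free alternative'': it observes that conservativity of $G(\Gamma^{w-\nu},\sigma_0)$ forces every cycle of $G(\Gamma,\sigma_0)$ to have average weight at least $\nu$, and then appeals to Proposition~\ref{prop:reachable_cycle} to conclude $\sigma_0\in\texttt{opt}_\Gamma\Sigma_0^M$. The paper applies that proposition with a non-strict inequality without comment, so the subtlety you flagged about passing through all rational $\nu'<\nu$ is precisely the missing step there; your telescoping argument avoids it entirely and in fact yields the slightly stronger uniform prefix bound $\sum_{i=0}^{n-1}w(v_i,v_{i+1})\geq n\nu-\pi^*(v_0)$. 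On the other hand, the paper's version is a two-line computation once Proposition~\ref{prop:reachable_cycle} is taken for granted, whereas your route needs the extra ingredient that a conservative graph admits a finite Bellman--Ford potential.
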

\begin{proof}
Let $C=(v_1, \ldots, v_{\ell}v_1)$ any cycle in $G(\Gamma, \sigma_0)$.
Then, the following holds (if $v_{\ell+1}=v_1$):
$\frac{w(C)}{\ell}  = \frac{1}{\ell}\sum_{i=1}^{\ell} w(v_i, v_{i+1})
		   = \nu + \frac{1}{\ell}\sum_{i=1}^{\ell} \Big( w(v_i, v_{i+1})-\nu \Big) \geq \nu$,
where $\frac{1}{\ell}\sum_{i=1}^{\ell} \big( w(v_i, v_{i+1})-\nu \big) \geq 0$
holds because $G(\Gamma^{w-\nu}, \sigma_0)$ is conservative.
By Proposition~\ref{prop:reachable_cycle}, since $w(C)/\ell \geq \nu$ for
	every cycle $C$ in $G^{\Gamma}_{\sigma_0}$, then $\sigma_0\in\texttt{opt}_\Gamma\Sigma_0^M$.
\end{proof}

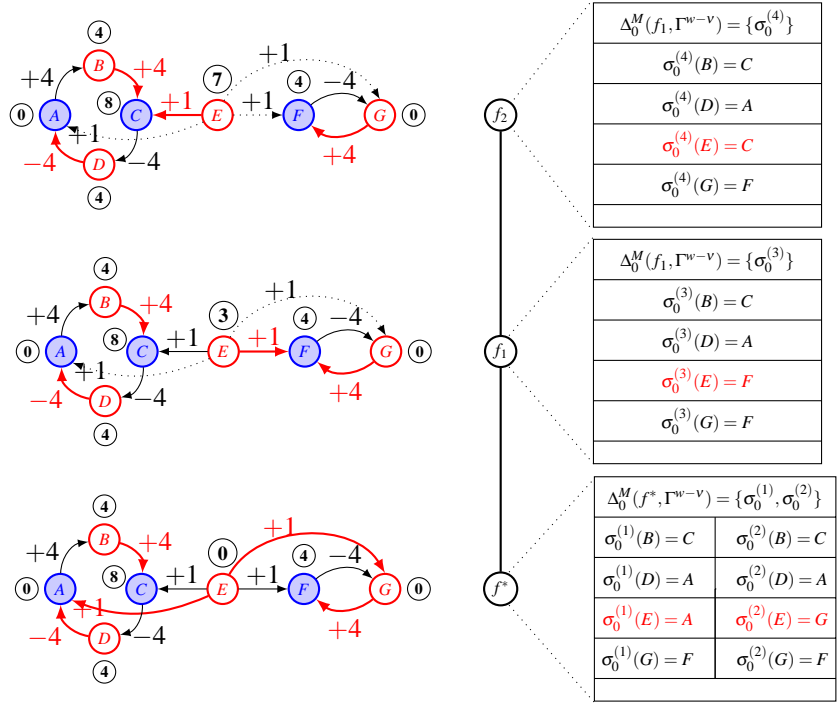
\begin{figure}[!h]
\center
\begin{tikzpicture}[scale=.65, node distance=1 and 1]
 		\node[node, thick, color=red, label={$\sizedcircled{.75}{7}$}] (E2) {$E$};
\node[node, thick, color=blue, fill=blue!20,  left=of E2, label={left, xshift=.5ex, yshift=1ex:$\sizedcircled{.6}{8}$}] (C2) {$C$};
		\node[node, thick, color=red, above=of C2, xshift=-4.75ex, yshift=-4ex, label={$\sizedcircled{.6}{4}$}] (B2) {$B$};
		\node[node, thick, color=blue, fill=blue!20, left=of C2, label={left:$\sizedcircled{.6}{0}$}] (A2) {$A$};
		\node[node, thick, color=red, below=of C2, xshift=-4.75ex, yshift=4ex, label={below:$\sizedcircled{.6}{4}$}] (D2) {$D$};
		\node[node, thick, color=blue, fill=blue!20, right=of E2, label={$\sizedcircled{.6}{4}$}] (F2) {$F$};
		\node[node, thick, color=red, right=of F2, label={right:$\sizedcircled{.6}{0}$}] (G2) {$G$};
		\draw[->, color=red, thick] (E2) to [bend left=0] node[above, yshift=-.5ex] {$+1$} (C2);
		\draw[->, dotted] (E2) to [bend left=0] node[above, yshift=-.5ex] {$+1$} (F2);
		\draw[->, dotted] (E2) to [bend left=22] node[above left, xshift=-2ex, yshift=-1.25ex] {$+1$} (A2.south east);
		\draw[->, dotted] (E2) to [bend left=60] node[above left, xshift=0ex, yshift=-1ex] {$+1$} (G2.north);
		\draw[->] (A2) to [bend left=40] node[left] {$+4$} (B2);
		\draw[->,color=red, thick] (B2) to [bend left=40] node[xshift=1.5ex, yshift=1ex] {$+4$} (C2);
		\draw[->] (C2) to [bend left=40] node[xshift=1ex, yshift=-1ex] {$-4$} (D2);
		\draw[->,color=red, thick] (D2) to [bend left=40] node[xshift=-2ex, yshift=-1ex] {$-4$} (A2);
		\draw[->] (F2) to [bend left=40] node[above, yshift=-.5ex] {$-4$} (G2);
		\draw[->,color=red, thick] (G2) to [bend left=40, yshift=1.5ex] node[below] {$+4$} (F2);
		\node[node, thick, right=of G2, xshift=5ex] (fA) {$f_2$};
		\node[node, thick, below=of fA, yshift=-20ex] (fB) {$f_1$};
 		\node[node, thick, left=of fB, xshift=-25ex, color=red, label={$\sizedcircled{.75}{3}$}] (E1) {$E$};
\node[node, thick, color=blue, fill=blue!20, left=of E1, label={left, xshift=.5ex, yshift=1ex:$\sizedcircled{.6}{8}$}] (C1) {$C$};
		\node[node, thick, color=red, above=of C1, xshift=-4.75ex, yshift=-4ex, label={$\sizedcircled{.6}{4}$}] (B1) {$B$};
		\node[node, thick, color=blue, fill=blue!20, left=of C1, label={left:$\sizedcircled{.6}{0}$}] (A1) {$A$};
		\node[node, thick, color=red, below=of C1, xshift=-4.75ex, yshift=4ex, label={below:$\sizedcircled{.6}{4}$}] (D1) {$D$};
		\node[node, thick, color=blue, fill=blue!20, right=of E1, label={$\sizedcircled{.6}{4}$}] (F1) {$F$};
		\node[node, thick, color=red, right=of F1, label={right:$\sizedcircled{.6}{0}$}] (G1) {$G$};
		\draw[->] (E1) to [bend left=0] node[above, yshift=-.5ex] {$+1$} (C1);
		\draw[->, color=red, thick] (E1) to [bend left=0] node[above, yshift=-.5ex] {$+1$} (F1);
		\draw[->, dotted] (E1) to [bend left=22] node[above left, xshift=-2ex, yshift=-1.25ex] {$+1$} (A1.south east);
		\draw[->, dotted] (E1) to [bend left=60] node[above left, xshift=0ex, yshift=-1ex] {$+1$} (G1.north);
		\draw[->] (A1) to [bend left=40] node[left] {$+4$} (B1);
		\draw[->,color=red, thick] (B1) to [bend left=40] node[xshift=1.5ex, yshift=1ex] {$+4$} (C1);
		\draw[->] (C1) to [bend left=40] node[xshift=1ex, yshift=-1ex] {$-4$} (D1);
		\draw[->,color=red, thick] (D1) to [bend left=40] node[xshift=-2ex, yshift=-1ex] {$-4$} (A1);
		\draw[->] (F1) to [bend left=40] node[above, yshift=-.5ex] {$-4$} (G1);
		\draw[->,color=red, thick] (G1) to [bend left=40, yshift=1.5ex] node[below] {$+4$} (F1);
		\node[node, thick, below=of fB, yshift=-20ex] (fC) {$f^*$};
 		\node[node, thick, left=of fC, xshift=-25ex, color=red, label={$\sizedcircled{.75}{0}$}] (E0) {$E$};
\node[node, thick, color=blue, fill=blue!20, left=of E0, label={left, xshift=.5ex, yshift=1ex:$\sizedcircled{.6}{8}$}] (C0) {$C$};
		\node[node, thick, color=red, above=of C0, xshift=-4.75ex, yshift=-4ex, label={$\sizedcircled{.6}{4}$}] (B0) {$B$};
		\node[node, thick, color=blue, fill=blue!20, left=of C0, label={left:$\sizedcircled{.6}{0}$}] (A0) {$A$};
		\node[node, thick, color=red, below=of C0, xshift=-4.75ex, yshift=4ex, label={below:$\sizedcircled{.6}{4}$}] (D0) {$D$};
		\node[node, thick, color=blue, fill=blue!20, right=of E0, label={$\sizedcircled{.6}{4}$}] (F0) {$F$};
		\node[node, thick, color=red, right=of F0, label={right:$\sizedcircled{.6}{0}$}] (G0) {$G$};
	\draw[->] (E0) to [bend left=0] node[above, yshift=-.5ex] {$+1$} (C0);
	\draw[->] (E0) to [bend left=0] node[above, yshift=-.5ex] {$+1$} (F0);
	\draw[->, color=red, thick] (E0) to [bend left=22] node[above left, xshift=-2ex, yshift=-1.25ex] {$+1$} (A0.south east);
	\draw[->, color=red, thick] (E0) to [bend left=60] node[above left, xshift=0ex, yshift=-1ex] {$+1$} (G0.north);
	\draw[->] (A0) to [bend left=40] node[left] {$+4$} (B0);
	\draw[->,color=red, thick] (B0) to [bend left=40] node[xshift=1.5ex, yshift=1ex] {$+4$} (C0);
	\draw[->] (C0) to [bend left=40] node[xshift=1ex, yshift=-1ex] {$-4$} (D0);
	\draw[->,color=red, thick] (D0) to [bend left=40] node[xshift=-2ex, yshift=-1ex] {$-4$} (A0);
	\draw[->] (F0) to [bend left=40] node[above, yshift=-.5ex] {$-4$} (G0);
	\draw[->,color=red, thick] (G0) to [bend left=40, yshift=1.5ex] node[below] {$+4$} (F0);
		\draw[thick] (fA) to [] node[] {} (fB);
		\draw[thick] (fB) to [] node[] {} (fC);

		\node[scale=0.75, right=of fC, yshift=0ex, rectangle split, rectangle split parts=6, draw,
			minimum width=4cm, font=\small, rectangle split part align={center}] (t1)
  		{
		  {$\Delta^M_0(f^*, \Gamma^{w-\nu})=\{\sigma^{(1)}_0, \sigma^{(2)}_0\}$}
		     \nodepart{two}
       		             $\sigma^{(1)}_0(B)=C$ \hspace*{4ex} $\sigma^{(2)}_0(B)=C$
		     \nodepart{three}
       	  	             $\sigma^{(1)}_0(D)=A$ \hspace*{4ex} $\sigma^{(2)}_0(D)=A$
	     	     \nodepart{four}
                             \textcolor{red}{$\sigma^{(1)}_0(E)=A$} \hspace*{4ex} \textcolor{red}{$\sigma^{(2)}_0(E)=G$}
	 	     \nodepart{five}
                  	     $\sigma^{(1)}_0(G)=F$ \hspace*{4ex} $\sigma^{(2)}_0(G)=F$
 	     	     \nodepart{six}
		};
	 	 \draw (t1.text split) -- (t1.two split);
		 \draw (t1.two split) -- (t1.three split);
		 \draw (t1.three split) -- (t1.four split);
		 \draw (t1.four split) -- (t1.five split);
		 \draw (t1.five split) -- (t1.six split);
		 \draw (t1.six split) -- (t1.seven split);
 		 \draw (t1.seven split) -- (t1.eight split);

	\node[right=of fB, scale=0.75, yshift=0ex, rectangle split, rectangle split parts=6, draw,
			minimum width=4cm, font=\small, rectangle split part align={center}] (t2)
  		  {
		{$\Delta^M_0(f_1, \Gamma^{w-\nu})=\{\sigma^{(3)}_0\}$}
	     \nodepart{two}
     	              $\sigma^{(3)}_0(B)=C$
	     \nodepart{three}
                      $\sigma^{(3)}_0(D)=A$
	     \nodepart{four}
                      \textcolor{red}{$\sigma^{(3)}_0(E)=F$}
	     \nodepart{five}
                      $\sigma^{(3)}_0(G)=F$
 	     \nodepart{six}
		};

		\node[right=of fA, yshift=0ex, scale=0.75, rectangle split, rectangle split parts=6, draw,
			minimum width=4cm, font=\small, rectangle split part align={center}] (t2)
 		  {
		{$\Delta^M_0(f_1, \Gamma^{w-\nu})=\{\sigma^{(4)}_0\}$}
	     \nodepart{two}
     	              $\sigma^{(4)}_0(B)=C$
	     \nodepart{three}
                      $\sigma^{(4)}_0(D)=A$
	     \nodepart{four}
                      \textcolor{red}{$\sigma^{(4)}_0(E)=C$}
	     \nodepart{five}
                      $\sigma^{(4)}_0(G)=F$
 	     \nodepart{six}
		};
    \draw[dotted] (6,0.32) -- (17:7.9cm);
    \draw[dotted] (6,-0.32) -- (-17:7.9cm);
    \draw[dotted] (5.9,-4.5) -- (7.64, -2.45);
    \draw[dotted] (5.9,-5.14) -- (7.64, -7.15);
    \draw[dotted] (5.9,-9.325) -- (7.64, -7.4);
    \draw[dotted] (5.9,-10) -- (7.64, -12);
\end{tikzpicture}
\caption{The decomposition of $\texttt{opt}_\Gamma\Sigma^M_0$ (right), for the \MPG $\Gamma_{\text{ex}}$,
which corresponds to the Energy-Lattice $\mathcal{X}^*_{\Gamma_{\text{ex}}}=\{f^*, f_1, f_2\}$ (center)
(as in Example~\ref{exp1}). Here, $f^*\leq f_1\leq f_2$. This brings a lattice
$\mathcal{D}^*_{\Gamma_{\text{ex}}}$ of 3 basic subgames of $\Gamma_{\text{ex}}$ (left). }\label{fig:ex1_ordering}
\end{figure}

The following proposition asserts some properties of the extremal-SEPM{s}.
\begin{Prop}\label{prop:extremal}
Let the \MPG $\Gamma$ be $\nu$-valued, for some $\nu\in \Q$.
Let $\mathcal{X}^*_{\Gamma}$ be the Energy-Lattice of $\texttt{opt}_\Gamma\Sigma^M_0$.
Moreover, let $f:V\rightarrow \C_{\Gamma}$ be a SEPM for the reweighted \EG $\Gamma^{w-\nu}$.
Then, the following three properties are equivalent:
\begin{enumerate}
  \item $f\in \mathcal{X}^*_{\Gamma}$;
  \item There exists $\sigma_0\in\texttt{opt}_\Gamma\Sigma^M_0$ s.t.
    $\pi^*_{G(\Gamma^{w-\nu}, \sigma_0)}(v)=f(v)$ for every $v\in V$.
  \item $V_f=\mathcal{W}_0(\Gamma^{w-\nu})=V$ and $\Delta^M_0(f, \Gamma^{w-\nu})\neq\emptyset$;
\end{enumerate}
\end{Prop}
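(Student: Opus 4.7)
The proof plan is to establish the cycle of implications $(1) \Leftrightarrow (2)$, $(2) \Rightarrow (3)$, and $(3) \Rightarrow (2)$, drawing on the lemmas already proved in this section together with Proposition~\ref{prop:relation_MPG_EG}.

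The equivalence $(1) \Leftrightarrow (2)$ is immediate from the very definition of $\mathcal{X}^*_\Gamma$ given in Theorem~\ref{thm:main_energystructure}, since membership $f \in \mathcal{X}^*_\Gamma$ just means $f = \pi^*_{G(\Gamma^{w-\nu},\sigma_0)}$ for some $\sigma_0 \in \texttt{opt}_\Gamma\Sigma^M_0$, which is exactly the content of~(2). So no real work is required here.

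For $(2) \Rightarrow (3)$, fix an optimal $\sigma_0 \in \texttt{opt}_\Gamma\Sigma^M_0$ with $\pi^*_{G(\Gamma^{w-\nu},\sigma_0)} = f$. Then $\sigma_0 \in \Delta^M_0(f, \Gamma^{w-\nu})$, so that family is nonempty. By Lemma~\ref{lem:pre_main_thm}, the graph $G(\Gamma^{w-\nu},\sigma_0)$ is conservative; since conservativity is equivalent to the least feasible-potential being finite on every vertex, I get $V_f = V$. Finally, since $\Gamma$ is $\nu$-valued, every $v \in V$ admits a strategy securing value $\geq \nu$ in $\Gamma$, so Proposition~\ref{prop:relation_MPG_EG} gives $\mathcal{W}_0(\Gamma^{w-\nu}) = V$, completing~(3).

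For $(3) \Rightarrow (2)$, pick any $\sigma_0 \in \Delta^M_0(f, \Gamma^{w-\nu})$, which is possible by hypothesis. Then $\pi^*_{G(\Gamma^{w-\nu},\sigma_0)} \equiv f$ on $V$, and since $V_f = V$ this least feasible-potential is finite everywhere. As above, the existence of a finite FP on all vertices forbids negative cycles, so $G(\Gamma^{w-\nu}, \sigma_0)$ is conservative. Now Lemma~\ref{lem:conservative_implies_optimal} applies and yields $\sigma_0 \in \texttt{opt}_\Gamma\Sigma^M_0$, which is exactly~(2).

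The only step that is not a direct citation is the standard fact that a weighted digraph admits a finite feasible-potential if and only if it is conservative; I expect this to be the main (though minor) obstacle, and I would dispatch it by noting that a negative cycle $C$ forces $\pi^*(v) = \top$ for every $v$ from which $C$ is reachable (otherwise summing the FP inequalities along $C$ would yield $0 \geq -w(C) > 0$), while conversely the Bellman–Ford fixpoint produces finite values whenever no negative cycle exists. With this observation the three implications chain together and the proposition follows.
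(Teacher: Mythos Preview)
Your proposal is correct and follows essentially the same route as the paper: both use the definitional equivalence $(1)\Leftrightarrow(2)$, then invoke Lemma~\ref{lem:pre_main_thm} and Proposition~\ref{prop:relation_MPG_EG} for the forward direction to~(3), and Lemma~\ref{lem:conservative_implies_optimal} for the backward direction. The only difference is cosmetic (you phrase the cycle as $(2)\Rightarrow(3)\Rightarrow(2)$ rather than $(1)\Rightarrow(3)\Rightarrow(1)$), and you make explicit the standard equivalence ``conservative $\Leftrightarrow$ least-FP finite everywhere'' that the paper uses without comment.
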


\begin{proof}[Proof of ($1\iff 2$)] Indeed, $\mathcal{X}^*_{\Gamma} =
	\{ \pi^*_{G(\Gamma^{w-\nu}, \sigma_0)}\mid {\sigma_0\in \texttt{opt}_{\Gamma}\Sigma_0^M}\} $.
\end{proof}
\begin{proof}[Proof of ($1\Rightarrow 3$)] Assume $f\in \mathcal{X}^*_{\Gamma}$. Since ($1\iff 2$), there exist $\sigma_0\in\texttt{opt}_\Gamma\Sigma_0^M$
s.t. $\pi^*_{G(\Gamma^{w-\nu}, \sigma_0)}=f$. Thus, $\sigma_0\in\Delta^M_0(f, \Gamma^{w-\nu})$, so that $\Delta^M_0(f, \Gamma^{w-\nu})\neq\emptyset$.
We claim $V_f=\mathcal{W}_0(\Gamma^{w-\nu})=V$. Since $\forall (v\in V)\, \val{\Gamma}{v}=\nu$, then $\mathcal{W}_0(\Gamma^{w-\nu})=V$ by Proposition~\ref{prop:relation_MPG_EG}.
Next, $G(\Gamma^{w-\nu}, \sigma_0)$ is conservative by Lemma~\ref{lem:pre_main_thm}.
Since $G(\Gamma^{w-\nu}, \sigma_0)$ is conservative and $f=\pi^*_{G(\Gamma^{w-\nu}, \sigma_0)}$, then $V_f=V$. Therefore, $V_f=\mathcal{W}_0(\Gamma^{w-\nu})=V$.
\end{proof}
\begin{proof}[Proof of ($1\Leftarrow 3$)]
Since $\Delta^M_0(f, \Gamma^{w-\nu})\neq\emptyset$, pick some $\sigma_0\in\Delta^M_0(f, \Gamma^{w-\nu})$;
so, $f=\pi^*_{G(\Gamma^{w-\nu}, \sigma_0)}$. Since $V_f=V$ and $f=\pi^*_{G(\Gamma^{w-\nu}, \sigma_0)}$, then $G(\Gamma^{w-\nu}, \sigma_0)$ is conservative.
Since $G(\Gamma^{w-\nu}, \sigma_0)$ is conservative, then $\sigma_0\in\texttt{opt}_\Gamma\Sigma_0^M$ by Lemma~\ref{lem:conservative_implies_optimal}.
Since $f=\pi^*_{G^*}$ and $\sigma_0\in\texttt{opt}_\Gamma\Sigma_0^M$, then $f\in\mathcal{X}^*_{\Gamma}$ because $2\Rightarrow 1$.
\end{proof}

\section{A Recursive Enumeration of $\mathcal{X}^*_{\Gamma}$
	and $\texttt{opt}_\Gamma\big(\Sigma_{0}^M\big)$}\label{sect:recursive_enumeration}

An enumeration algorithm for a set $S$ provides an exhaustive listing of all the elements of $S$ (without repetitions).
As mentioned in Section~\ref{sect:energy}, by Theorem~\ref{thm:ergodic_partition},
no loss of generality occurs if we assume $\Gamma$ to be $\nu$-valued for some $\nu\in\Q$.
One run of the algorithm given in~\cite{CominR16a} allows one to partition an \MPG $\Gamma$,
into several domains $\Gamma_{i}$ each one being $\nu_{i}$-valued for $\nu_{i}\in S_\Gamma$;
in $O(|V|^2|E|W)$ time and linear space. Still, by Proposition~\ref{prop:counter_example},
Theorem~\ref{Thm:pos_opt_strategy} is not sufficient for enumerating the whole $\texttt{opt}_\Gamma(\Sigma^M_0)$;
it is enough only for $\Delta^M_0(f^*_\nu,\Gamma^{w-\nu})$ where $f^*_\nu$ is the least-SEPM of $\Gamma^{w-\nu}$,
which is just the join/top component of $\texttt{opt}_\Gamma(\Sigma^M_0)$.
However, thanks to Theorem~\ref{thm:main_energystructure},
	we now have a refined description of $\texttt{opt}_\Gamma \Sigma^M_0$ in terms $\mathcal{X}^*_{\Gamma}$.

We offer a recursive enumeration of all the extremal-SEPM{s}, \ie $\mathcal{X}^*_{\Gamma}$,
and for computing the corresponding partitioning of $\texttt{opt}_\Gamma\big(\Sigma_{0}^M\big)$.
In order to avoid duplicate elements in the enumeration,
	the algorithm needs to store a lattice $\mathcal{B}^*_{\Gamma}$ of subgames of $\Gamma$,	which is related to $\mathcal{X}^*_\Gamma$.
We assume to have a data-structure $T_{\Gamma}$ supporting the following operations,
given a subarena $\Gamma'$ of $\Gamma$: $\texttt{insert}(\Gamma', T_{\Gamma})$ stores $\Gamma'$ into $T_{\Gamma}$;
$\texttt{contains}(\Gamma', T_{\Gamma})$ returns $\texttt{T}$ if and only if $\Gamma'$ is in $T_{\Gamma}$, and $\texttt{F}$ otherwise.
A simple implementation of $T_{\Gamma}$ goes by indexing $N^{\text{out}}_{\Gamma'}(v)$ for each $v\in V$ (\eg with a trie data-structure).
This can run in $O(|E|\log|V|)$ time, consuming $O(|E|)$ space per stored item. Similarly,
	one can index SEPM{s} in $O(|V|\log(|V|W))$ time and $O(|V|)$ space per stored item.

The listing procedure is named $\texttt{enum()}$, it takes a $\nu$-valued \MPG $\Gamma$ and goes as follows.
\begin{enumerate}
\item Compute the least-SEPM $f^*$ of $\Gamma$, and \texttt{print} $\Gamma$ to output.
	Theorem~\ref{Thm:pos_opt_strategy} can be employed at this
	stage for enumerating $\Delta^M_0(f^*, \Gamma^{w-\nu})$:
	indeed, these are all and only those positional strategies
	lying in the \emph{Cartesian} product of all the arcs $(u,v)\in E$
	that are \emph{compatible} with $f^*$ in $\Gamma^{w-\nu}$ (because $f^*$ is the least-SEPM of $\Gamma$).
\item Let $\texttt{St}\leftarrow\emptyset$ be an empty stack of vertices.
\item For each $\hat{u}\in V_0$, do the following:
	\begin{itemize}
		\item Compute $E_{\hat{u}}\leftarrow \{(\hat{u},v)\in E\mid f^*(\hat{u})
			\prec f^*(v)\ominus(w(\hat{u},v)-\nu)\}$;
			\item If $E_{\hat{u}}\neq \emptyset$, then:
		\begin{itemize}
		\item Let $E'\leftarrow E_{\hat{u}} \cup \{ (u,v)\in E\mid u\neq \hat{u}\}$
			and $\Gamma'\leftarrow (V,E',w,\langle V_0, V_1\rangle)$.
		\item If $\texttt{contains}(\Gamma', T_{\Gamma})=\texttt{F}$, do the following:
		\begin{itemize}
			\item Compute the least-SEPM ${f'}^*$ of ${\Gamma'}^{w-\nu}$;
			\item If $V_{{f'}^*}=V$:

				-- Push $\hat{u}$ on top of $\texttt{St}$ and $\texttt{insert}(\Gamma', T_{\Gamma})$.

				-- If $\texttt{contains}({f'}^*, T_{\Gamma})=\texttt{F}$,
					then $\texttt{insert}({f'}^*, T_{\Gamma})$ and \texttt{print} ${f'}^*$.
		\end{itemize}
		\end{itemize}
	\end{itemize}
\item While $\texttt{St}\neq\emptyset$:
	\begin{itemize}
	\item \texttt{pop} $\hat{u}$ from $\texttt{St}$;
		Let $E_{\hat{u}}\leftarrow \{(\hat{u},v)\in E\mid f^*(\hat{u}) \prec f^*(v)\ominus(w(\hat{u},v)-\nu)\}$,
		and $E'\leftarrow E_{\hat{u}} \cup \{ (u,v)\in E\mid u\neq \hat{u}\}$,
		and $\Gamma'\leftarrow (V,E',w,\langle V_0, V_1\rangle)$;
	\item Make a recursive call to $\texttt{enum()}$ on input $\Gamma'$.
	\end{itemize}
\end{enumerate}
Down the recursion tree, when computing least-SEPMs, the children Value-Iterations can amortize by
starting from the energy-levels of the parent. The lattice of subgames $\mathcal{B}^*_{\Gamma}$
comprises all and only those subgames $\Gamma'\subseteq \Gamma$ that are eventually inserted
into $T_{\Gamma}$ at Step~(3) of $\texttt{enum}()$; these are called the \emph{basic subgames} of $\Gamma$.
The correctness of $\texttt{enum()}$ follows by Theorem~\ref{thm:main_energystructure}
and Theorem~\ref{Thm:pos_opt_strategy}. In summary, we obtain the following result.
\begin{Thm}\label{thm:listing_algo} There exists a recursive algorithm for enumerating (w/o repetitions)
all elements of $\mathcal{B}^*_{\Gamma}$
with time-delay\footnote{A listing algorithm has $O(f(n))$ \emph{time-delay} when
the time spent between any two consecutives is $O(f(n))$.} $O(|V|^3 |E|\, W)$, on any input \MPG $\Gamma$;
moreover, the algorithm works with $O(|V||E|)+\Theta\big(|E||\mathcal{B}^*_{\Gamma}|)\big)$ space.
So, it enumerates $\mathcal{X}^*_\Gamma$ (w/o repetitions) in $O\big(|V|^3|E|W|\mathcal{B}^*_{\Gamma}|\big)$ total time,
and $O(|V||E|)+\Theta\big(|E||\mathcal{B}^*_{\Gamma}|\big)$ space.
\end{Thm}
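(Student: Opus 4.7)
The plan is to establish three things for the procedure $\texttt{enum}()$: soundness and no-repetition of the enumeration, completeness of the enumeration for $\mathcal{B}^*_\Gamma$ (which then transfers to $\mathcal{X}^*_\Gamma$ by taking least-SEPMs), and the stated time-delay and space bounds. No-repetition is immediate from the two $\texttt{contains}(\cdot,T_\Gamma)$ guards. For soundness, when $\texttt{enum}()$ prints some ${f'}^*$ it has just verified both $V_{{f'}^*}=V$ and $\Delta^M_0({f'}^*,\Gamma'^{w-\nu})\neq\emptyset$ (the latter witnessed by any selection, at each $V_0$-vertex, of an arc compatible with ${f'}^*$ in the pruned $\Gamma'$); Proposition~\ref{prop:extremal}, in the equivalence $(1)\iff(3)$, then yields ${f'}^*\in\mathcal{X}^*_\Gamma$.

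For completeness I would induct on the energy-lattice ordering of $\mathcal{X}^*_\Gamma$, with the base case that the least-SEPM $f^*$ is printed at Step~(1). For the inductive step, take any $f\in\mathcal{X}^*_\Gamma$ with $f\neq f^*$: by Theorem~\ref{thm:main_energystructure} there is some $\sigma_0\in\texttt{opt}_\Gamma\Sigma^M_0$ with $f=\pi^*_{G(\Gamma^{w-\nu},\sigma_0)}$. Since $f\neq f^*$, there must exist $\hat u\in V_0$ at which $\sigma_0$ chooses an arc $(\hat u,v)$ that is incompatible with the parent least-SEPM (i.e.\ $f^*(\hat u)\prec f^*(v)\ominus(w(\hat u,v)-\nu)$); otherwise all arcs of $\sigma_0$ at $V_0$ would be compatible with $f^*$, forcing $\pi^*_{G(\Gamma^{w-\nu},\sigma_0)}=f^*$. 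The subgame $\Gamma'$ produced at Step~(3) for this $\hat u$ therefore contains $\sigma_0$, so $\sigma_0\in\texttt{opt}_\Gamma\Sigma^M_0\cap\Sigma^M_0(\Gamma')$; Lemma~\ref{lem:pre_main_thm} implies $V_{{f'}^*}=V$, the algorithm descends into $\Gamma'$, and the resulting extremal-SEPM ${f'}^*$ satisfies $f^*\sqsubset {f'}^*\sqsubseteq f$. Iterating this argument along a strictly ascending chain of $\mathcal{X}^*_\Gamma$ terminating at $f$ (finiteness of $\mathcal{X}^*_\Gamma$), $f$ is eventually reached.

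For the complexity analysis, the dominant per-frame work is Step~(3), which loops over at most $|V_0|\leq|V|$ pivots $\hat u$ and computes one least-SEPM of $\Gamma'^{w-\nu}$ per pivot. Because the reweighting $w\mapsto w-\nu$ with $\nu=N/D\in S_\Gamma$ is implemented as $\Gamma^{D\cdot w - N}$ with $D\leq|V|$, the maximum absolute weight in $\Gamma'^{w-\nu}$ is $O(|V|W)$, so each Value-Iteration runs in $O(|V||E|\cdot|V|W)=O(|V|^2|E|W)$ by Theorem~\ref{Thm:VI}, giving the time-delay $|V|\cdot O(|V|^2|E|W)=O(|V|^3|E|W)$. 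The space of a single activation is $O(|V|+|E|)$ for the arc-set, energy-level arrays, and local stack $\texttt{St}$; the recursion depth is bounded by the length of a maximal chain in $\mathcal{B}^*_\Gamma$, which is $O(|V|)$, so cumulative active-stack space is $O(|V||E|)$. The persistent $T_\Gamma$ stores each basic subgame by its edge set in $\Theta(|E|)$ space (and each least-SEPM in $O(|V|)$, absorbed), contributing $\Theta(|E||\mathcal{B}^*_\Gamma|)$. Multiplying time-delay by $|\mathcal{B}^*_\Gamma|$ yields the total-time bound $O(|V|^3|E|W|\mathcal{B}^*_\Gamma|)$ for producing $\mathcal{X}^*_\Gamma$.

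The main obstacle is the completeness direction: one must precisely justify that the single-pivot pruning at Step~(3) always reaches an immediate predecessor of the target $f$ in $\mathcal{X}^*_\Gamma$, so that a depth-first traversal exhausts the whole Energy-Lattice. The amortization remark (initializing the child Value-Iteration from the parent energy levels) is justified by monotonicity of the lift operator and preserves the per-call bound, so it does not affect the time-delay. Combining completeness, soundness, Theorem~\ref{thm:main_energystructure}, and the two complexity bookkeepings yields the statement.
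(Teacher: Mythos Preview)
The paper gives essentially no proof of this theorem beyond the single sentence that correctness of \texttt{enum()} follows from Theorem~\ref{thm:main_energystructure} and Theorem~\ref{Thm:pos_opt_strategy}; the complexity bounds are simply asserted. Your proposal therefore supplies far more detail than the paper, and it proceeds along exactly the lines the paper gestures at (your use of Proposition~\ref{prop:extremal} for soundness is essentially an unpacking of Theorem~\ref{thm:main_energystructure}). Your correctness argument and your time-delay analysis are sound; in particular, the factor $|V|$ coming from the rescaling $\Gamma^{w-\nu}\cong\Gamma^{Dw-N}$ with $D\le |V|$ is the right reason for the extra $|V|$ in the Value-Iteration cost.

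Two points. First, your stated worry about completeness is misplaced: you do \emph{not} need to hit an immediate predecessor of $f$. Your own argument already yields $f^*\sqsubset f'^*\sqsubseteq f$ with $f'^*\in\mathcal{X}^*_\Gamma$ and with $\sigma_0$ still a strategy in $\Gamma'$, so the same step reapplies verbatim inside $\Gamma'$; finiteness of $\mathcal{X}^*_\Gamma$ terminates the chain at $f$. The only thing that needs one extra line is the interaction with $T_\Gamma$: if $\Gamma'$ is already present the current frame does not descend, but the frame that first inserted $\Gamma'$ did, so the subtree below $\Gamma'$ has been (or will be) explored elsewhere in the recursion.

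Second, there is a genuine gap in your space analysis. You claim the recursion depth, which indeed equals the length of a maximal chain in $\mathcal{B}^*_\Gamma$, is $O(|V|)$, but you give no argument, and the obvious bound is only $O(|E|)$: each pivot at $\hat u$ deletes at least one arc (namely some arc compatible with the current least-SEPM), and nothing prevents the \emph{same} $\hat u$ from being pivoted repeatedly down a chain, shedding one compatible arc each time as the least-SEPM rises. With depth $O(|E|)$ and $O(|E|)$ per active frame you would get $O(|E|^2)$ stack space, not $O(|V||E|)$. The paper does not justify the $O(|V||E|)$ term either, so this may simply be an unexplained or loose bound in the original; but as written your derivation of it does not go through.
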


To conclude we observe that $\mathcal{B}^*_{\Gamma}$ and $\mathcal{X}^*_\Gamma$ are not isomorphic as lattices,
not even as sets (the cardinality of $\mathcal{B}^*_{\Gamma}$ can be greater that that of $\mathcal{X}^*_\Gamma$).
Indeed, there is a surjective antitone mapping $\varphi_\Gamma$ from $\mathcal{B}^*_{\Gamma}$ onto $\mathcal{X}^*_\Gamma$,
(\ie $\varphi_\Gamma$ sends $\Gamma'\in \mathcal{B}^*_{\Gamma}$ to its least-SEPM $f^*_{\Gamma'}\in \mathcal{X}^*_\Gamma$);
still, we can construct instances of \MPG{s} such that $|\mathcal{B}^*_{\Gamma}| > |\mathcal{X}^*_\Gamma|$,
	\ie $\varphi_\Gamma$ is not into and $\mathcal{B}^*_{\Gamma}$, $\mathcal{X}^*_\Gamma$ are not isomorphic.
That would be a case of \emph{degeneracy}, and an example \MPG $\Gamma_\text{d}$ is given in \figref{fig:ex_degeneracy}.
\begin{figure}[!h] \center
\begin{tikzpicture}[scale=.55, arrows={-triangle 45}, node distance=1.5 and 2]
 		\node[node, thick, color=red, label={below right, xshift=-.6ex, yshift=.6ex:$\sizedcircled{.125ex}{0}$}] (u1) {$u_1$};
		\node[node, thick, color=blue, xshift=0ex, yshift=-5ex, fill=blue!20, above=of u1,
						label={above right, xshift=-.6ex, yshift=-.6ex:$\sizedcircled{.125ex}{0}$}] (u2) {$u_2$};
		\node[node, thick, color=red, xshift=5ex, yshift=5ex, below left=of u1,
						label={left, xshift=.25ex, yshift=0ex:$\sizedcircled{.125ex}{1}$}] (u3) {$u_3$};

		\node[node, thick, color=red,
						label={below left, xshift=.5ex, yshift=.6ex:$\sizedcircled{.125ex}{0}$}, right=of u1] (v1) {$v_1$};
		\node[node, thick, color=blue, xshift=0ex, yshift=-5ex, fill=blue!20, above=of v1,
						label={above left, xshift=.5ex, yshift=-.6ex:$\sizedcircled{.125ex}{0}$}] (v2) {$v_2$};
		\node[node, thick, color=red, xshift=-5ex, yshift=5ex, below right=of v1,
						label={right, xshift=-.25ex, yshift=0ex:$\sizedcircled{.125ex}{1}$}] (v3) {$v_3$};

		\node[node, thick, color=red, xshift=-7ex, yshift=-2.5ex, below right = of u1,
						label={below, xshift=0ex, yshift=.25ex:$\sizedcircled{.125ex}{0}$}] (t) {$t$};

		\node[node, thick, color=red, xshift=-4ex, yshift=4ex, below left=of t,
							label={above left, xshift=.5ex, yshift=-.6ex:$\sizedcircled{.125ex}{0}$}] (u4) {$u_4$};
		\node[node, thick, color=blue, xshift=0ex, yshift=5ex, fill=blue!20, below=of u4,
							label={below right, xshift=-.6ex, yshift=.6ex:$\sizedcircled{.125ex}{0}$}] (u5) {$u_5$};

		\node[node, thick, color=red, xshift=3ex, yshift=4ex, below right=of t,
							label={above right, xshift=-.6ex, yshift=-.6ex:$\sizedcircled{.125ex}{0}$}] (v4) {$v_4$};
		\node[node, thick, color=blue, xshift=0ex, yshift=5ex, fill=blue!20, below=of v4,
							label={below left, xshift=.5ex, yshift=.6ex:$\sizedcircled{.125ex}{0}$}] (v5) {$v_5$};
		\draw[color=red] (u1) to [bend right=30] node[right] {$0$} (u2);
		\draw[] (u2) to [bend right=30] node[left] {$0$} (u1);
		\draw[] (u3) to [] node[above left, yshift=-1ex] {$-2$} (u1);
		\draw[color=red] (u3) to [] node[above] {$-1$} (t);
		\draw[] (t) to [] node[above] {$-10$} (u4);
		\draw[color=red] (u4) to [bend right=30] node[left] {$0$} (u5);
		\draw[] (u5) to [bend right=30] node[right] {$0$} (u4);

		\draw[] (v1) to [bend right=30] node[right] {$0$} (v2);
		\draw[] (v2) to [bend right=30] node[left] {$0$} (v1);
		\draw[] (v3) to [] node[above right, yshift=-1ex] {$-2$} (v1);
		\draw[color=red] (v3) to [] node[above] {$-1$} (t);
		\draw[color=red] (t) to [] node[above] {$0$} (v4);
		\draw[color=red] (v4) to [bend right=30] node[left] {$0$} (v5);
		\draw[] (v5) to [bend right=30] node[right] {$0$} (v4);
\end{tikzpicture}
\caption{An MPG $\Gamma_\text{d}$ for which $|\mathcal{B}^*_{\Gamma}| > |\mathcal{X}^*_\Gamma|$.}\label{fig:ex_degeneracy}
\end{figure}
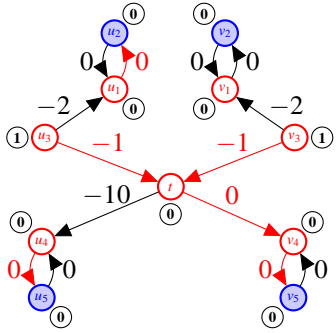

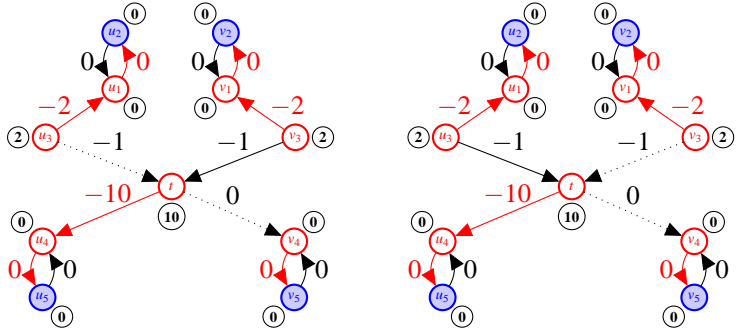
\begin{figure}[!h] \center
\begin{tikzpicture}[scale=.55, arrows={-triangle 45}, node distance=1.5 and 2]
 		\node[node, thick, color=red, label={below right, xshift=-.6ex, yshift=.6ex:$\sizedcircled{.125ex}{0}$}] (u1) {$u_1$};
		\node[node, thick, color=blue, xshift=0ex, yshift=-5ex, fill=blue!20, above=of u1,
							label={above right, xshift=-.6ex, yshift=-.6ex:$\sizedcircled{.125ex}{0}$}] (u2) {$u_2$};
		\node[node, thick, color=red, xshift=5ex, yshift=5ex, below left=of u1,
							label={left, xshift=.25ex, yshift=0ex:$\sizedcircled{.125ex}{2}$}] (u3) {$u_3$};

		\node[node, thick, color=red, label={below left, xshift=.5ex, yshift=.6ex:$\sizedcircled{.125ex}{0}$}, right=of u1] (v1) {$v_1$};
		\node[node, thick, color=blue, xshift=0ex, yshift=-5ex, fill=blue!20, above=of v1,
							 label={above left, xshift=.5ex, yshift=-.6ex:$\sizedcircled{.125ex}{0}$}] (v2) {$v_2$};
		\node[node, thick, color=red, xshift=-5ex, yshift=5ex, below right=of v1,
								label={right, xshift=-.25ex, yshift=0ex:$\sizedcircled{.125ex}{2}$}] (v3) {$v_3$};

		\node[node, thick, color=red, xshift=-7ex, yshift=-2.5ex, below right = of u1,
								label={below, xshift=0ex, yshift=.25ex:$\sizedcircled{.125ex}{10}$}] (t) {$t$};

		\node[node, thick, color=red, xshift=-4ex, yshift=4ex, below left=of t,
						label={above left, xshift=.5ex, yshift=-.6ex:$\sizedcircled{.125ex}{0}$}] (u4) {$u_4$};
		\node[node, thick, color=blue, xshift=0ex, yshift=5ex, fill=blue!20, below=of u4,
						label={below right, xshift=-.6ex, yshift=.6ex:$\sizedcircled{.125ex}{0}$}] (u5) {$u_5$};

		\node[node, thick, color=red, xshift=3ex, yshift=4ex, below right=of t,
							label={above right, xshift=-.6ex, yshift=-.6ex:$\sizedcircled{.125ex}{0}$}] (v4) {$v_4$};
		\node[node, thick, color=blue, xshift=0ex, yshift=5ex, fill=blue!20, below=of v4,
							label={below left, xshift=.5ex, yshift=.6ex:$\sizedcircled{.125ex}{0}$}] (v5) {$v_5$};
		\draw[color=red] (u1) to [bend right=30] node[right] {$0$} (u2);
		\draw[] (u2) to [bend right=30] node[left] {$0$} (u1);
		\draw[color=red] (u3) to [] node[above left, yshift=-1ex] {$-2$} (u1);
		\draw[dotted] (u3) to [] node[above] {$-1$} (t);
		\draw[color=red] (t) to [] node[above] {$-10$} (u4);
		\draw[color=red] (u4) to [bend right=30] node[left] {$0$} (u5);
		\draw[] (u5) to [bend right=30] node[right] {$0$} (u4);

		\draw[color=red] (v1) to [bend right=30] node[right] {$0$} (v2);
		\draw[] (v2) to [bend right=30] node[left] {$0$} (v1);
		\draw[color=red] (v3) to [] node[above right, yshift=-1ex] {$-2$} (v1);
		\draw[] (v3) to [] node[above] {$-1$} (t);
		\draw[dotted] (t) to [] node[above] {$0$} (v4);
		\draw[color=red] (v4) to [bend right=30] node[left] {$0$} (v5);
		\draw[] (v5) to [bend right=30] node[right] {$0$} (v4);
\end{tikzpicture}
\qquad
\begin{tikzpicture}[scale=.55, arrows={-triangle 45}, node distance=1.5 and 2]
 		\node[node, thick, color=red,
						label={below right, xshift=-.6ex, yshift=.6ex:$\sizedcircled{.125ex}{0}$}] (u1) {$u_1$};
		\node[node, thick, color=blue, xshift=0ex, yshift=-5ex, fill=blue!20, above=of u1,
						label={above right, xshift=-.6ex, yshift=-.6ex:$\sizedcircled{.125ex}{0}$}] (u2) {$u_2$};
		\node[node, thick, color=red, xshift=5ex, yshift=5ex, below left=of u1,
						label={left, xshift=.25ex, yshift=0ex:$\sizedcircled{.125ex}{2}$}] (u3) {$u_3$};
		\node[node, thick, color=red,
							label={below left, xshift=.5ex, yshift=.6ex:$\sizedcircled{.125ex}{0}$}, right=of u1] (v1) {$v_1$};
		\node[node, thick, color=blue, xshift=0ex, yshift=-5ex, fill=blue!20, above=of v1,
							label={above left, xshift=.5ex, yshift=-.6ex:$\sizedcircled{.125ex}{0}$}] (v2) {$v_2$};
		\node[node, thick, color=red, xshift=-5ex, yshift=5ex, below right=of v1,
							label={right, xshift=-.25ex, yshift=0ex:$\sizedcircled{.125ex}{2}$}] (v3) {$v_3$};

		\node[node, thick, color=red, xshift=-7ex, yshift=-2.5ex, below right = of u1,
							label={below, xshift=0ex, yshift=.25ex:$\sizedcircled{.125ex}{10}$}] (t) {$t$};

		\node[node, thick, color=red, xshift=-4ex, yshift=4ex, below left=of t,
							label={above left, xshift=.5ex, yshift=-.6ex:$\sizedcircled{.125ex}{0}$}] (u4) {$u_4$};
		\node[node, thick, color=blue, xshift=0ex, yshift=5ex, fill=blue!20, below=of u4,
							label={below right, xshift=-.6ex, yshift=.6ex:$\sizedcircled{.125ex}{0}$}] (u5) {$u_5$};
		\node[node, thick, color=red, xshift=3ex, yshift=4ex, below right=of t,
							label={above right, xshift=-.6ex, yshift=-.6ex:$\sizedcircled{.125ex}{0}$}] (v4) {$v_4$};
		\node[node, thick, color=blue, xshift=0ex, yshift=5ex, fill=blue!20, below=of v4,
							label={below left, xshift=.5ex, yshift=.6ex:$\sizedcircled{.125ex}{0}$}] (v5) {$v_5$};
		\draw[color=red] (u1) to [bend right=30] node[right] {$0$} (u2);
		\draw[] (u2) to [bend right=30] node[left] {$0$} (u1);
		\draw[color=red] (u3) to [] node[above left, yshift=-1ex] {$-2$} (u1);
		\draw[] (u3) to [] node[above] {$-1$} (t);
		\draw[color=red] (t) to [] node[above] {$-10$} (u4);
		\draw[color=red] (u4) to [bend right=30] node[left] {$0$} (u5);
		\draw[] (u5) to [bend right=30] node[right] {$0$} (u4);

		\draw[color=red] (v1) to [bend right=30] node[right] {$0$} (v2);
		\draw[] (v2) to [bend right=30] node[left] {$0$} (v1);
		\draw[color=red] (v3) to [] node[above right, yshift=-1ex] {$-2$} (v1);
		\draw[dotted] (v3) to [] node[above] {$-1$} (t);
		\draw[dotted] (t) to [] node[above] {$0$} (v4);
		\draw[color=red] (v4) to [bend right=30] node[left] {$0$} (v5);
		\draw[] (v5) to [bend right=30] node[right] {$0$} (v4);
\end{tikzpicture}
\caption{Two basic subgames $\Gamma^1_d\neq \Gamma^2_d$ of $\Gamma_\text{d}$,
						having the same least-SEPM $f^*_1=f^*_2$.}\label{fig:ex_degeneracy_subgames}
\end{figure}
In the MPG $\Gamma_\text{d}$, Player~0 has to decide how to move only at $u_3,v_3$ and $t$; the remaining moves are forced.
	The least-SEPM $f^*$ of $\Gamma_\text{d}$ is:
		$f^*(u_3)=1$, $f^*(v_3)=1$, $f^*(t)=0$, and $\forall_{x\in V_{\Gamma_\text{d}}\setminus\{u_3,v_3,t\}}\, f^*(x)=0$;
			leading to the following memory-less strategy: $\sigma^*_0(u_3)=t$, $\sigma^*_0(v_3)=t$, $\sigma^*_0(t)=v_4$.
Then, consider the lattice of subgames $\mathcal{B}^*_{\Gamma_\text{d}}$;
	particularly, consider the following two basic subgames $\Gamma^1_d$, $\Gamma^2_d$:
	let $\Gamma'_d$ be the arena obtained by removing the arc $(t,v_4)$ from $\Gamma_d$;
	let $\Gamma^1_d$ be the arena obtained by removing the arc $(u_3,t)$ from $\Gamma'_d$;
	let $\Gamma^2_d$ be the arena obtained by removing the arc $(v_3,t)$ from $\Gamma'_d$.
See \figref{fig:ex_degeneracy_subgames} for an illustration.
	Next, let $f^*_1,f^*_2$ be the least-SEPMs of $\Gamma^1_d$ and $\Gamma^2_d$, respectively;
	 then, $f^*_1(u_3)=f^*_2(u_3)=2$, $f^*_1(v_3)=f^*_2(v_3)=2$, $f^*_1(t)=f^*_2(t)=10$,
		and $\forall_{x\in V_{\Gamma_\text{d}}\setminus\{u_3,v_3,t\}}\, f^*_1(x)=f^*_2(x)=0$.
Thus, $\Gamma^1_d\neq \Gamma^2_d$, but $f^*_1=f^*_2$; this proves that $\Gamma_\text{d}$ is \emph{degenerate}
	and that $\mathcal{B}^*_{\Gamma}$, $\mathcal{X}^*_\Gamma$ are not isomorphic.

\section{Conclusion}\label{sect:conclusions}
We observed a unique complete decomposition of $\texttt{opt}_{\Gamma}\Sigma^M_0$ in terms of extremal-SEPM{s} in reweighted \EG{s},
also offering a pseudo-polynomial total-time recursive algorithm for enumerating (w/o repetitions) all the elements of $\mathcal{X}^*_\Gamma$,
\ie all extremal-SEPMs, and for computing the components of the corresponding partitioning $\mathcal{B}^*_{\Gamma}$ of $\texttt{opt}_{\Gamma}\Sigma^M_0$.

It would be interesting to study further properties enjoyed by $\mathcal{B}^*_{\Gamma}$ and $\mathcal{X}^*_\Gamma$;
and we ask for more efficient algorithms for enumerating $\mathcal{X}^*_\Gamma$,
\eg pseudo-polynomial time-delay and \emph{polynomial space} enumerations.

\paragraph*{Acknowledgments}
This work was supported by \emph{Department of Computer Science, University of Verona, Verona, Italy},
under PhD grant “Computational Mathematics and Biology”, on a co-tutelle agreement with
\emph{LIGM, Universit\'e Paris-Est in Marne-la-Vall\'ee, Paris, France}.

\bibliographystyle{elsarticle-num-names}
\bibliography{biblio}


\end{document}